\newtheorem{dfn}{Definition}[section]
\newtheorem{theorem}{Theorem}[section]
\newtheorem{lemma}[theorem]{Lemma}
\newtheorem{claim}[theorem]{Claim}
\newcommand{\sq}{\hbox{\rlap{$\sqcap$}$\sqcup$}}
\newcommand{\qed}{\hspace*{\fill}\sq}
\newenvironment{proof}{\noindent {\bf Proof.}\ }{\qed\par\vskip 4mm\par}
\def\N{{\mathbb{N}}}
\def\reals{{\mathbb{R}}}
\def\BR{{\mathcal{BR}}}
\newcommand{\vo}{\mathbf{0}}
\title{\bf
Efficient computation of approximate pure Nash equilibria\\in
congestion games\thanks{This work was partially supported by the grant NRF-RF2009-08 ``Algorithmic
aspects of coalitional games'' and the EC-funded STREP Project FP7-ICT-258307 EULER.}}
\author{Ioannis Caragiannis\thanks{Research Academic Computer Technology Institute \& Department of Computer Engineering and Informatics, University of Patras, 26500 Rio, Greece. Email: {\tt caragian@ceid.upatras.gr}} \and Angelo Fanelli\thanks{Division of Mathematical Sciences, School of Physical and Mathematical Sciences, Nanyang Technological University, Singapore. Email: {\tt angelo.fanelli@ntu.edu.sg, ngravin@pmail.ntu.edu.sg, askopalik@ntu.edu.sg}}\\ \and Nick Gravin$^\ddag$\\ \and Alexander Skopalik$^\ddag$}
\begin{document}

\maketitle

\begin{abstract}
Congestion games constitute an important class of games in which computing an exact or even approximate pure Nash equilibrium is in general {\sf PLS}-complete. We present a surprisingly simple polynomial-time algorithm that computes $O(1)$-approximate Nash equilibria in these games. In particular, for congestion games with linear latency functions, our algorithm computes $(2+\epsilon)$-approximate pure Nash equilibria in time polynomial in the number of players, the number of resources and $1/\epsilon$. It also applies to games with polynomial latency functions with constant maximum degree $d$; there, the approximation guarantee is $d^{O(d)}$. The algorithm essentially identifies a polynomially long sequence of best-response moves that lead to an approximate equilibrium; the existence of such short sequences is interesting in itself. These are the first positive algorithmic results for approximate equilibria in non-symmetric congestion games. We strengthen them further by proving that, for congestion games that deviate from our mild assumptions, computing $\rho$-approximate equilibria is {\sf PLS}-complete for any polynomial-time computable $\rho$.
\end{abstract}

\newpage
\section{Introduction}
\label{sec:intro}
Among other solution concepts, the notion of the pure Nash equilibrium plays a central role in Game Theory. It characterizes situations with non-cooperative deterministic players in which no player has any incentive to unilaterally deviate from the current situation in order to achieve a higher payoff. Questions related to their existence and efficient computation have been extensively addressed in the context of congestion games. In these games, pure Nash equilibria are guaranteed to exist through potential function arguments: any pure Nash equilibrium corresponds to a local minimum of a potential function. Unfortunately, this proof of existence is inefficient and computing a local minimum for this function is a computationally-hard task. This statement has been made formal in the work of Fabrikant et al. \cite{FabrikantPT04} where it is proved that the problem of computing a pure Nash equilibrium is {\sf PLS}-complete.

Such negative complexity results significantly question the importance of pure Nash equilibria as solution concepts that characterize the behavior of rational players. Approximate pure Nash equilibria, which characterize situations where no player can {\em significantly improve} her payoff by unilaterally deviating from her current strategy, could serve as alternative solution concepts\footnote{Actually, approximate pure Nash equilibria may be more desirable as solution concepts in practical decision making settings since they can accommodate small modeling inaccuracies due to uncertainty (e.g., see the arguments in \cite{CGC04}).} provided that they can be computed efficiently. In this paper, we study the complexity of computation of approximate pure Nash equilibria in congestion games and prove the first positive algorithmic results for important (and quite general) classes of congestion games. Our main result is a polynomial-time algorithm that computes $O(1)$-approximate pure Nash equilibria in congestion games under mild restrictions.

\medskip\noindent{\bf Problem statement and related work.}
Congestion games were introduced by Rosenthal \cite{R73}. In a congestion game, players compete over a set of resources. Each resource incurs a latency to all players that use it; this latency depends on the number of players that use the resource according to a resource-specific, non-negative, and non-decreasing latency function. Among a given set of strategies (over sets of resources), each player aims to select one selfishly, trying to minimize her individual total cost, i.e., the sum of the latencies on the resources in her strategy. Typical examples include network congestion games where the network links correspond to the resources and each player has alternative paths that connect two nodes as strategies. Congestions games in which players have the same set of available strategies are called symmetric.

Rosenthal \cite{R73} proved that congestion games admit a potential function with the following remarkable property: the difference in the potential value between two states (i.e., two snapshots of strategies) that differ in the strategy of a single player equals to the difference of the cost experienced by this player in these two states. This immediately implies the existence of a pure Nash equilibrium. Any sequence of improvement moves by the players strictly decreases the value of the potential and a state corresponding to a local minimum of the potential will eventually be reached; this corresponds to a pure Nash equilibrium. Monderer and Shapley \cite{Monderer96} proved that any game that admits such a cost-revealing (or exact) potential function is isomorphic to a congestion game.

The existence of a potential function allows us to view the problem of computing a pure Nash equilibrium as a local search problem \cite{Johnson88}, i.e., as the problem of computing a local minimum of the potential function. Fabrikant et al. \cite{FabrikantPT04} proved that the problem is {\sf PLS}-complete (informally, as hard as it could be given that there is an associated potential function). This negative result applies to symmetric congestion games as well as to non-symmetric network congestion games. Ackermann et al. \cite{AckermannRV08} studied the impact of combinatorial structure of congestion games to complexity and extended such negative results to games with linear latency functions. One consequence of {\sf PLS}-completeness results is that almost all the states of the game are such that any sequence of players' improvement moves that originates from these states must be exponentially long (in terms of the number of players) in order to reach a pure Nash equilibrium. Efficient algorithms are known only for special cases. For example, in symmetric network congestion games, Fabrikant et al. \cite{FabrikantPT04} show that the Rosenthal's potential function can be (globally) minimized efficiently by a flow computation.

The above negative results have led to the study of the complexity of approximate Nash equilibria. A $\rho$-approximate pure Nash equilibrium is a state, from which no player has an incentive to deviate so that she decreases her cost by a factor larger than $\rho$. Skopalik and V\"ocking \cite{SkopalikV08} show that, in general, the problem is still {\sf PLS}-complete for any polynomially computable $\rho$. Efficient algorithms are known only for special cases. For symmetric congestion games, Chien and Sinclair \cite{ChienS07} prove that the $(1+\epsilon)$-improvement dynamics converges to a $(1 + \epsilon$)-approximate Nash equilibrium after a polynomial number of steps; this result holds under additional mild assumptions on the latency functions (a ``bounded jump'' property) and the participation of the players in the dynamics. Skopalik and V\"ocking \cite{SkopalikV08} prove that this approach cannot be generalized.
They present non-symmetric congestion games with latency functions satisfying the bounded-jump property, so that every sequence of approximate improvement moves from a given initial state to an approximate equilibrium is exponentially long. Daskalakis and Papadimitriou \cite{DP07} present algorithms for the broader class of anonymous games assuming that the number of players' strategies is constant; for congestion games, this assumption is very restrictive. Efficient algorithms for approximate equilibria have been recently obtained for other classes of games such as constraint satisfaction \cite{BhalgatCK10,NT09}, network formation \cite{AC09}, and facility location games \cite{CH10}.

In light of these negative results, several authors have considered other properties of the dynamics of congestion games. The papers \cite{AwerbuchAEMS08,FanelliFM08} consider the question of whether efficient states (in the sense that the total cost of the players, or social cost, is small compared to the optimum one) can be reached by best-response moves. Recall that such states are not necessary approximate Nash equilibria. Fanelli et al. \cite{FanelliFM08} proved that congestion games with linear latency functions converge to states that approximate the optimal social cost within a constant factor after an almost linear (in the number of players) number of best response moves under mild assumptions on the participation of each player in the dynamics. Negative results in \cite{AwerbuchAEMS08} indicate that these assumptions are necessary in order to obtain convergence in subexponential time. However, Awerbuch et al. \cite{AwerbuchAEMS08} show that using almost unrestricted sequences of ($1+\epsilon$)-improvement best-response moves in congestion games with polynomial latency functions, the players rapidly converge to efficient states. Similar approaches have been followed in the context of other games as well, such as multicast \cite{CKM+08,CCL+07}, cut \cite{CMS06}, and valid-utility games \cite{MV04}.

A notion that is historically related to congestion games (but rather loosely connected to our work) is that of the price of anarchy, introduced by Koutsoupias and Papadimitriou \cite{KP99}. The price of anarchy captures the impact of selfishness on efficiency and is defined as the worst-case ratio of the social cost in any pure Nash equilibrium and the social optimum (see \cite{Roughgarden09} and the references therein for tight bounds on congestion games). Christodoulou et al. \cite{ChristodoulouKS09} extended the notion of the price of anarchy to approximate equilibria and provided tight bounds for congestion games with polynomial latency functions.

\medskip\noindent{\bf Our contribution.} We present the first polynomial-time algorithm that computes $O(1)$-approximate pure Nash equilibria in {\em non-symmetric} congestion games with polynomial latency functions of constant maximum degree. In particular, our algorithm computes $(2+\epsilon)$-approximate pure Nash equilibria in congestion games with linear latency functions, and $d^{O(d)}$ approximate equilibria for polynomial latency functions of maximum degree $d$. The algorithm is surprisingly simple. Essentially, starting from a specific initial state, it computes a sequence of best-response player moves of length that is bounded by a polynomial in the number of players and $1/\epsilon$. To the best of our knowledge, the existence of such short sequences was not known before and is interesting in itself. The sequence consists of phases so that the players that participate in each phase experience costs that are polynomially related. This is crucial in order to obtain convergence in polynomial time. Another interesting part of our algorithm is that, within each phase, it coordinates the best response moves according to two different (but simple) criteria; this is the main tool that guarantees that the effect of a phase to previous ones is negligible and, eventually, an approximate equilibrium is reached. The parameters used by the algorithm and its approximation guarantee have a nice relation to properties of Rosenthal's potential function. Our bounds are marginally higher than the worst-case ratio of the potential value at an almost exact pure Nash equilibrium over the globally optimum potential value.

We remark that, following the classical definition of polynomial latency functions in the literature, we assume that they have non-negative coefficients. We show that this is a necessary limitation. In particular, by significantly extending the reduction of \cite{SkopalikV08}, we prove that the problem of computing a $\rho$-approximate equilibrium in congestion games with linear latency functions with negative offsets is {\sf PLS}-complete. This negative statement also applies to games with polynomial latency functions with non-negative coefficients and maximum degree that is polynomial in the number of players.

\medskip\noindent{\bf Roadmap.} We begin with definitions and preliminary results and observations in Section \ref{sec:prelim}. The description of the algorithm then appears in Section \ref{sec:algorithm}. The analysis of the algorithm is presented in Section \ref{sec:analysis}. We conclude with a discussion that includes the statement of our {\sf PLS}-completeness result and open problems in Section \ref{sec:open}. Due to lack of space, many proofs have been put in appendix.

\section{Definitions and preliminaries}
\label{sec:prelim}

A \emph{congestion game} ${\cal G}$ is represented by the tuple $\left(N, E, (\Sigma_u)_{u \in N}, (f_e)_{e \in E}\right)$. There is a set of $n$ {\em players} $N=\{1, 2, ..., n\}$ and a set of {\em resources} $E$. Each player $u$ has a set of available {\em strategies} $\Sigma_u$; each strategy $s_u$ in $\Sigma_u$ consists of a non-empty set of resources, i.e., $s_u\subseteq 2^E$. A snapshot of strategies, with one strategy per player, is called a {\em state} and is represented by a vector of players' strategies, e.g., $S=(s_1, s_2, ..., s_n)$. Each resource $e\in E$ has a {\em latency function} $f_e:\N\mapsto \reals$ which denotes the latency incurred to the players using resource $e$; this latency depends on the number of players whose strategies include the particular resource. For a state $S$, let us define $n_e(S)$ to be the number of players that use resource $e$ in $S$, i.e., $n_e(S)=|\{u\in N: e\in s_u\}|$. Then, the latency incurred by resource $e$ to the players that use it is $f_e(n_e(S))$. The {\em cost} of a player $u$ at a state $S$ is the total latency she experiences at the resources in her strategy $s_u$, i.e., $c_u(S)=\sum_{e\in s_u}{f_e(n_e(S))}$. We mainly consider congestion games in which the resources have polynomial latency functions with non-negative coefficients. More precisely, the latency function of resource $e$ is $f_e(x) = \sum_{k=0}^d{a_{e,k}x^k}$ with $a_{e,k}\geq 0$. The special case of linear latency functions (i.e., $d=1$) is of particular interest. Observe that for polynomials with non-negative coefficients and maximum degree $d$, we have $f_e(x+1)\leq 2^d f_e(x)$ and $f_e(x)\leq n^d f_e(1)$ for every positive integer $x$.

Players act selfishly; each of them aims to select a strategy that minimizes her cost, given the strategies of the other players. Given a state $S=(s_1, s_2, ..., s_n)$ and a strategy $s'_u$ for player $u$, we denote by $(S_{-u},s'_u)$ the state obtained from $S$ when player $u$ {\em deviates} to strategy $s'_u$. For a strategy $S$, an {\em improvement move} (or, simply, a {\em move}) for player $u$ is the deviation to any strategy $s'_u$ that (strictly) decreases her cost, i.e., $c_u(S_{-u},s'_u) < c_u(S)$. For $q\geq 1$, such a move is called a $q$-{\em move} if it satisfies $c_u(S_{-u},s'_u) < \frac{c_u(S)}{q}$. A {\em best-response move} is a move that minimizes the cost of the player (of course, given the strategies of the other players). So, from state $S$, a move of player $u$ to strategy $s_u$ is a best-response move (and is denoted by $\BR_u(S)$) when $c_u(S_{-u},s'_u) = \min_{s\in \Sigma_u}c_u(S_{-u},s)$. With some abuse in notation, we use $\BR_u(\vo)$ to denote the best-response of player $u$ assuming that no other player participates in the game.

A state $S$ is called a {\em pure Nash equilibrium} (or, simply, an {\em equilibrium}) when $c_u(S)\leq c_u(S_{-u},s'_u)$ for every player $u\in N$ and every strategy $s'_u\in \Sigma_u$. In this case, we say that no player has (any incentive to make) a move. Similarly, a state is called a $q$-{\em approximate pure Nash equilibrium} (henceforth called, simply, a $q$-{\em approximate equilibrium}) when no player has a $q$-move.

Congestion games are potential games. They admit a {\em potential function} $\Phi:\prod_u{\Sigma_u}\mapsto \reals$, defined over all states of the game, with the following property: for any two state $S$ and $(S_{-u},s'_u)$ that differ only in the strategy of player $u$, it holds that $\Phi(S_{-u},s'_u) - \Phi(S) = c_u(S_{-u},s'_u)-c_u(S)$. Clearly, local minima of the potential function corresponds to states that are pure Nash equilibria. The function $\Phi(S)=\sum_{e\in E}{\sum_{j=1}^{n_e(S)}}{f_e(j)}$ (first used by Rosenthal \cite{R73}) is such a potential function. A nice property of this particular potential function is that the potential value at a state lies between the sum of latencies incurred by the resources and the total cost of the players.

\begin{claim}\label{claim:potential-inequalities}
For any state $S$ of a congestion game with a set of players $N$, a set of resource $E$, and latency functions $(f_e)_{e\in E}$, it holds that
$$\sum_{e\in E}{f_e(n_e(S))} \leq \Phi(S)\leq \sum_{u\in N}{c_u(S)}.$$
\end{claim}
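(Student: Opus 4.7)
The plan is to unpack the definitions on both sides and then use two elementary facts about $f_e$: that it is non-negative and that it is non-decreasing. First I would rewrite the total cost by swapping the order of summation:
\begin{equation*}
\sum_{u \in N} c_u(S) \;=\; \sum_{u \in N} \sum_{e \in s_u} f_e(n_e(S)) \;=\; \sum_{e \in E} n_e(S)\, f_e(n_e(S)),
\end{equation*}
since each resource $e$ contributes $f_e(n_e(S))$ to exactly those players whose strategy contains $e$, and there are $n_e(S)$ such players by definition.

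For the left inequality, I would observe that in the defining sum $\Phi(S) = \sum_{e \in E} \sum_{j=1}^{n_e(S)} f_e(j)$, the term $f_e(n_e(S))$ already appears (as the $j = n_e(S)$ summand), and every other summand $f_e(j)$ is non-negative by assumption on the latency functions. Hence $\sum_{j=1}^{n_e(S)} f_e(j) \geq f_e(n_e(S))$ resource by resource, and summing over $e$ gives $\Phi(S) \geq \sum_{e \in E} f_e(n_e(S))$.

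For the right inequality, I would use monotonicity: for every $j \in \{1, \ldots, n_e(S)\}$ we have $f_e(j) \leq f_e(n_e(S))$, so $\sum_{j=1}^{n_e(S)} f_e(j) \leq n_e(S)\, f_e(n_e(S))$. Summing over $e$ and combining with the rewritten expression for the total cost yields $\Phi(S) \leq \sum_{e \in E} n_e(S)\, f_e(n_e(S)) = \sum_{u \in N} c_u(S)$.

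There is no real obstacle here; the claim is a direct bookkeeping exercise and the only mildly nontrivial step is the double-counting identity for $\sum_u c_u(S)$. The proof uses only that $f_e \geq 0$ and $f_e$ is non-decreasing, both of which are part of the standing assumptions on congestion games, so no additional hypotheses need to be invoked.
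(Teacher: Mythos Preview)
Your proof is correct and follows essentially the same approach as the paper: the paper also bounds $\sum_{j=1}^{n_e(S)} f_e(j)$ above by $n_e(S)\,f_e(n_e(S))$ via monotonicity and then uses the same double-counting identity $\sum_{e} n_e(S)\,f_e(n_e(S)) = \sum_{u} c_u(S)$, while for the lower bound it simply says the inequality ``follows easily by the definition of $\Phi$,'' which is exactly the non-negativity observation you spell out.
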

In the rest of the paper, the term potential function is used specifically for Rosenthal's potential function.

We now present a simple observation which will be used extensively in the analysis of our algorithm. Consider a sequence of moves in which only players from a subset $F$ of $N$ participate while players in $N\setminus F$ are frozen to their strategies throughout the whole sequence. We will think of this sequence as a sequence of moves in a {\em subgame} played among the players of $F$ on the resources of $E$. In this subgame, each player in $F$ has the same set of strategies as in the original game; players of $N\setminus F$ do not participate in the subgame, although they contribute to the latency of the resources at which they have been frozen. Thus, the {\em modified} latency function of resource $e$ is then $f^{F}_{e}(x)=f_e(x+t_e)$, where $t_e$ stands for the number of players of $N\setminus F$ on resource $e$. Then, it is not hard to see that the subgame is a congestion game as well. Clearly, if $f_e$ is a linear (respectively, polynomial of maximum degree $d$) function with non-negative coefficients, so is $f^F_e$ and the bound established in Lemma \ref{lem:ratio-linear} (respectively, Lemma \ref{lem:ratio-d}, see below) also holds for the subgame. From the perspective of a player in $F$, nothing changes. At any state $S$, such a player experiences the same cost in both games and therefore has the same incentive to move, regardless whether we view $S$ as a state of the original game or the subgame. However, one should be careful with the definition of the potential for the subgame (denoted by $\Phi_F$) and use the modified latency functions $f^F_e$ instead of $f_e$. Throughout the paper, for a subset of players $F\subseteq N$, we use the notation $n^F_e(S)$ to denote the number of players in $F$ that use resource $e$ at state $S$.

\begin{claim}\label{claim:potential-sum}
Let $S$ be a state of the congestion game with a set of players $N$ and let $F\subseteq N$.
Then, $\Phi(S)\leq \Phi_{F}(S)+\Phi_{N\setminus F}(S)$ and $\Phi(S)\geq \Phi_F(S)$.
\end{claim}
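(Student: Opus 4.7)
The plan is to unfold the definitions of $\Phi$, $\Phi_F$, and $\Phi_{N\setminus F}$ as sums of latency contributions over resources and then perform an index-shift comparison on each resource separately. First I would fix a resource $e\in E$ and set $a_e=n^F_e(S)$ and $b_e=n^{N\setminus F}_e(S)$, so that $a_e+b_e=n_e(S)$. Using the modified latency functions described just before the claim, the subgame $F$ freezes the $b_e$ players of $N\setminus F$ on $e$, giving $f^F_e(x)=f_e(x+b_e)$, and symmetrically for $N\setminus F$. Consequently, on resource $e$, the contributions are
\[
\Phi(S)\big|_e=\sum_{k=1}^{n_e(S)}f_e(k),\quad \Phi_F(S)\big|_e=\sum_{k=b_e+1}^{n_e(S)}f_e(k),\quad \Phi_{N\setminus F}(S)\big|_e=\sum_{k=a_e+1}^{n_e(S)}f_e(k).
\]

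The inequality $\Phi(S)\ge \Phi_F(S)$ is then immediate, since removing the prefix $k=1,\ldots,b_e$ from the sum only drops non-negative terms $f_e(k)\ge 0$, and summing over all $e\in E$ preserves the inequality.

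For the upper bound $\Phi(S)\le \Phi_F(S)+\Phi_{N\setminus F}(S)$, I would argue resource by resource, assuming without loss of generality that $a_e\le b_e$ (the other case being symmetric). Rearranging,
\[
\Phi_F(S)\big|_e+\Phi_{N\setminus F}(S)\big|_e-\Phi(S)\big|_e=\sum_{k=b_e+1}^{n_e(S)}f_e(k)-\sum_{k=1}^{a_e}f_e(k),
\]
where both remaining sums have exactly $a_e$ terms. The main (but mild) observation, which is where monotonicity of $f_e$ comes in, is that re-indexing $k=b_e+i$ in the first sum gives $\sum_{i=1}^{a_e}f_e(b_e+i)\ge\sum_{i=1}^{a_e}f_e(i)$ because $b_e+i\ge i$ and $f_e$ is non-decreasing. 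This yields non-negativity of the difference on each resource, and summing over $e\in E$ gives the claim. I do not foresee any real obstacle; the only care needed is the bookkeeping of the modified latency functions $f^F_e$ so that the index shift by $b_e$ (respectively $a_e$) is applied correctly when writing out the subgame potentials.
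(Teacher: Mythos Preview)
Your proposal is correct and follows essentially the same route as the paper: both arguments unfold the subgame potentials via the shifted latency functions $f^F_e(x)=f_e(x+b_e)$, work resource by resource, and reduce the upper bound to the monotonicity comparison $f_e(b_e+i)\ge f_e(i)$ (the paper phrases this as $f_e(j)\le f_e(j+n^{N\setminus F}_e(S))$ after splitting $\sum_{j=1}^{n_e(S)}$ at $n^F_e(S)$, which is exactly your index shift). A minor remark: your ``without loss of generality $a_e\le b_e$'' is not actually needed, since the difference $\sum_{k=b_e+1}^{n_e(S)}f_e(k)-\sum_{k=1}^{a_e}f_e(k)$ has $a_e$ terms in each sum regardless, and the termwise comparison $f_e(b_e+i)\ge f_e(i)$ holds for all $b_e\ge 0$.
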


The approximation guarantee of our algorithm for congestion games of a particular class (e.g., with linear latency functions) is strongly related to the worst-case ratio (among {\em all} congestion games in the class) between the potential of an approximate equilibrium (the factor of approximation may be picked to be close to $1$) and the minimum potential value.
%
%
%
Below, we present upper bounds on this quantity; these upper bounds are used as parameters by our algorithm. The next lemma deals with the case of linear latency functions.

\begin{lemma}\label{lem:ratio-linear}
Consider a congestion game with linear latency functions. Let $q\in [1,2)$ and let $S$ be a $q$-approximate equilibrium. Then, $\Phi(S)\leq \frac{2q}{2-q}\Phi(S^*)$, where $S^*$ is a state of the game with minimum potential.
\end{lemma}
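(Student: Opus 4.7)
The plan is to combine the summed $q$-approximate equilibrium condition with a simple algebraic bound on the potential gap. Concretely, writing $\Phi(S)=\sum_e F_e(n_e(S))$ with $F_e(n)=a_e n(n+1)/2 + b_e n$, a direct computation yields the identity
$$F_e(n^*)-F_e(n) \;=\; f_e(n)(n^*-n) \;+\; \tfrac{a_e}{2}(n^*-n)(n^*-n+1).$$
The second term is non-negative for any non-negative integers $n, n^*$ (it is of the form $\tfrac{a_e}{2}k(k+1)$), so summing over resources and rearranging gives
$$\Phi(S) - \Phi(S^*) \;\leq\; \sum_u c_u(S)\;-\;\sum_e n_e(S^*)\, f_e(n_e(S)). \qquad (\star)$$

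Next I would unpack the condition $c_u(S) \leq q\cdot c_u(S_{-u}, s_u^*)$ summed over $u$. Because $f_e$ is linear, a resource $e\in s_u^*\setminus s_u$ contributes $f_e(n_e(S))+a_e$ to $c_u(S_{-u}, s_u^*)$, while a resource $e\in s_u^*\cap s_u$ contributes $f_e(n_e(S))$. Setting $B_e=|\{u : e\in s_u^*\setminus s_u\}|\leq n_e(S^*)$, one obtains $\sum_u c_u(S_{-u}, s_u^*) = \sum_e n_e(S^*)f_e(n_e(S)) + \sum_e B_e a_e$, so
$$\sum_u c_u(S) \;\leq\; q\sum_e n_e(S^*) f_e(n_e(S)) \;+\; q\sum_e n_e(S^*) a_e.$$
Substituting into $(\star)$ and grouping the common sum,
$$\Phi(S) - \Phi(S^*) \;\leq\; (q-1)\sum_e n_e(S^*)f_e(n_e(S)) \;+\; q\sum_e n_e(S^*)a_e.$$

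The last step is to bound each of the two resource sums. For the first, the per-resource estimate $n_e(S^*)f_e(n_e(S)) \leq F_e(n_e(S))+F_e(n_e(S^*))$ reduces (after separating the $a_e$ and $b_e$ pieces) to $2xy \leq x(x+1)+y(y+1)$ for non-negative integers $x,y$, which is just $(x-y)^2 + (x+y) \geq 0$; summing gives $\sum_e n_e(S^*) f_e(n_e(S)) \leq \Phi(S) + \Phi(S^*)$. For the second, $a_e n_e(S^*) \leq F_e(n_e(S^*))$ since $(n_e(S^*)+1)/2 \geq 1$ whenever $n_e(S^*) \geq 1$, so $\sum_e n_e(S^*) a_e \leq \Phi(S^*)$. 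Plugging both bounds in, $\Phi(S)-\Phi(S^*) \leq (q-1)[\Phi(S)+\Phi(S^*)] + q\Phi(S^*)$, which rearranges to $(2-q)\Phi(S) \leq 2q\Phi(S^*)$, as desired.

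The main obstacle is identifying the opening identity $(\star)$: one has to notice that the discrete convexity of the per-resource potential $F_e$ translates the potential gap into the cross-term $\sum_e n_e(S^*) f_e(n_e(S))$, which is precisely the quantity produced by summing the $q$-approximate equilibrium conditions under linear latencies. Once that alignment is observed, everything reduces to two very short arithmetic facts about the congestion vectors $(n_e(S))_e$ and $(n_e(S^*))_e$, and the algebra combines cleanly because the same sum appears with coefficient $q$ on the equilibrium side and $1$ on the potential side, producing the $\tfrac{2q}{2-q}$ factor after a one-line rearrangement.
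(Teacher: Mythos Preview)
Your proof is correct. Both your argument and the paper's rest on the same two ingredients---summing the $q$-approximate equilibrium inequalities $c_u(S)\le q\,c_u(S_{-u},s_u^*)$ over all players, and an elementary quadratic inequality in the congestion vectors $(n_e(S))_e$, $(n_e(S^*))_e$---and both land on the same final rearrangement $(2-q)\Phi(S)\le 2q\,\Phi(S^*)$. The organization differs, though. The paper expands $\Phi(S)$ directly, substitutes the summed equilibrium bound, and then collapses the cross term using the asymmetric inequality $xy\le \tfrac12 x^2-\tfrac12 x+y^2$ (their Claim~\ref{claim:integers}). You instead first isolate the potential gap via the discrete convexity identity $(\star)$, which produces the cross term $\sum_e n_e(S^*)f_e(n_e(S))$ up front; you then bound that cross term by $\Phi(S)+\Phi(S^*)$ using the symmetric inequality $2xy\le x(x+1)+y(y+1)$, and bound $\sum_e a_e n_e(S^*)$ by $\Phi(S^*)$ separately. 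Your decomposition is a bit more modular and makes the role of convexity explicit (the identity $(\star)$ is just the first-order lower bound for a convex function), whereas the paper's single asymmetric inequality does both jobs at once but is less transparent about why the pieces fit.
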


Our next (rather rough) bound applies to polynomial latency functions of maximum degree $d$. It is obtained by observing that the desired ratio is at most $(d+1)$ times the known upper bound of $d^{O(d)}$ for the price of anarchy of $2$-approximate equilibria \cite{ChristodoulouKS09}.

\begin{lemma}\label{lem:ratio-d}
Consider a congestion game with polynomial latency functions of maximum degree $d$, where $d\ge 2$. Let $q\in [1,2]$ and let $S$
be a $q$-approximate equilibrium. Then, $\Phi(S)/\Phi(S^*) \in d^{O(d)}$, where $S^*$ is a state of the game with the minimum potential.
\end{lemma}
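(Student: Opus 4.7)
The plan is to reduce the claim to the known price of anarchy for $2$-approximate pure Nash equilibria in \cite{ChristodoulouKS09} via a tight two-sided relation between Rosenthal's potential $\Phi$ and the social cost $C(X)=\sum_{u\in N}c_u(X)$.

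First, I would establish the relation $\Phi(X)\leq C(X)\leq (d+1)\,\Phi(X)$ for every state $X$. The upper bound $\Phi(X)\leq C(X)$ is already part of Claim \ref{claim:potential-inequalities}. For the other direction, writing $f_e(x)=\sum_{k=0}^d a_{e,k}x^k$ with $a_{e,k}\geq 0$, each monomial contributes $a_{e,k}\sum_{j=1}^{n_e(X)}j^k\geq a_{e,k}\,n_e(X)^{k+1}/(k+1)$ to $\Phi(X)$ and exactly $a_{e,k}\,n_e(X)^{k+1}$ to $C(X)$; summing over resources and degrees, and using $k+1\leq d+1$, yields $C(X)\leq (d+1)\,\Phi(X)$.

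Next, let $T$ be a state of minimum social cost. Since $q\leq 2$, any $q$-approximate equilibrium is in particular a $2$-approximate equilibrium, so the bound of \cite{ChristodoulouKS09} gives $C(S)\leq d^{O(d)}\,C(T)$. By the optimality of $T$ for $C$, we have $C(T)\leq C(S^*)$. Chaining these with the two-sided bound from the previous paragraph,
$$\Phi(S)\leq C(S)\leq d^{O(d)}\,C(T)\leq d^{O(d)}\,C(S^*)\leq (d+1)\,d^{O(d)}\,\Phi(S^*),$$
and the factor $d+1$ is absorbed into $d^{O(d)}$.

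The main obstacle, such as it is, is pinning down the precise form of the statement needed from \cite{ChristodoulouKS09}: a $d^{O(d)}$ bound on $C(S)/C(T)$ that is uniform over all $2$-approximate equilibria $S$ for congestion games with polynomial latencies of degree $d$. This follows from the smoothness-style argument used there, so no new work on the price of anarchy side is required; the rest is the elementary comparison between $\Phi$ and $C$ together with the fact that $T$ is defined to be optimal for $C$ (not for $\Phi$), which is what makes the chain of inequalities go through in the right direction.
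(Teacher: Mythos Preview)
Your proposal is correct and essentially identical to the paper's proof: both use $\Phi(S)\le C(S)$ from Claim~\ref{claim:potential-inequalities}, the bound $C(S^*)\le (d+1)\Phi(S^*)$ via the same Riemann-sum/integral comparison for $\sum_{j=1}^n j^k$, and the $d^{O(d)}$ price-of-anarchy bound for $2$-approximate equilibria from \cite{ChristodoulouKS09}. The only cosmetic difference is that you introduce the social-cost optimum $T$ explicitly and use $C(T)\le C(S^*)$, whereas the paper phrases the same step as ``the price of anarchy is at least $C(S)/C(S^*)$.''
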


\section{The algorithm}
\label{sec:algorithm}

In this section we describe our algorithm. It takes as input a
congestion game ${\cal G}$ with $n$ players and polynomial latency functions of
maximum degree $d$ and produces a state of
${\cal G}$. The algorithm uses a constant parameter $\psi>0$ and
two more parameters $q$ and $p$. Denote by $\theta_d(q)$ the upper
bound on the worst-case ratio (among all possible congestion games
with polynomial latency functions of degree $d$) between the
potential of any $q$-approximate equilibrium and the minimum
potential value that are provided by Lemmas \ref{lem:ratio-linear}
and \ref{lem:ratio-d}, i.e., $\theta_1(q)=\frac{2q}{2-q}$ and
$\theta_d(q)=d^{O(d)}$ for $d\geq 2$. We set the parameter $q$ to be
slightly larger than $1$ (in particular, $q=1+n^{-\psi}$) and
parameter $p$ to be slightly larger than $\theta_d(q)$ (in
particular, $p=\left(\frac{1}{\theta_d(q)}-n^{\psi}\right)^{-1}$).

The algorithm considers the \emph{optimistic cost} $\ell_u$ of each player $u$,
given by $\ell_u=\min_{s_u \in \Sigma_u} \sum_{e\in s_u} f_e(1)$; this is the minimum cost that $u$ could experience assuming
that no other player participates in the game. Let $\ell_{\max}$ denote the maximum optimistic cost among all players. The algorithm partitions
the players into \emph{blocks} $B_1, B_2, \ldots, B_m$; block $B_i$ contains player $u$ if and only if $\ell_u\in(b_{i+1}, b_{i}]$, where $b_i=\ell_{\max}\left(2^{d+1}n^{2\psi+d+1}\right)^{-i+1}$. It initializes each player $u$ to choose strategy $\BR_u(\vo)$.
Then, the algorithm coordinates best-response moves by the players as follows. By considering $i$ in the increasing order (from $1$ to $m-1$), it executes phase $i$ provided that block $B_i$ is non-empty. When at phase $i$, the algorithm lets players in $B_i$ make best-response $p$-moves and players in $B_{i+1}$ make best-response $q$-moves while this is possible. The algorithm is depicted in the following table.

\IncMargin{3em}
\RestyleAlgo{boxed}
\LinesNumbered
\begin{algorithm}
\SetKwData{Left}{left}\SetKwData{This}{this}\SetKwData{Up}{up}
\SetKwFunction{Union}{Union}\SetKwFunction{FindCompress}{FindCompress}
\SetKwInOut{Input}{input}\SetKwInOut{Output}{output}

\Input{A congestion game ${\cal G} = \left(N, E, (\Sigma_i)_{i \in N}, (f_e)_{e \in E}\right)$ with $n$ players and polynomial latency functions of maximum degree $d$}
\Output{A state of ${\cal G}$}

   Set $q=1+n^{-\psi}$ and $p=\left(\frac{1}{\theta_d(q)}-n^{-\psi}\right)^{-1}$\; \label{alg:step1}
   \lForEach{$u\in N$}{set $\ell_u = c_u\left(\BR_u(\vo)\right)}$\; \label{alg:step2}
   Set $\ell_{\min}=\min_{u\in N}{\ell_u}$, $\ell_{\max}=\max_{u\in N}{\ell_u}$, and set
       $m=1+\left\lceil\log_{2^{d+1}n^{2\psi+d+1}}{\left(\ell_{\max}/\ell_{\min}\right)}\right\rceil$\;  \label{alg:step3}
   (Implicitly) partition players into \emph{blocks} $B_1, B_2,\ldots, B_m$, such that $u\in B_i\Leftrightarrow
       \ell_u\in\left(\ell_{\max} \left(2^{d+1}n^{2\psi+d+1} \right)^{-i},\ell_{\max}
       \left(2^{d+1}n^{2\psi+d+1}\right)^{-i+1}\right]$\; \label{alg:step4}
   \lForEach{$u\in N$}{set $u$ to play the strategy $s_u\leftarrow \BR_u(\vo)$}\; \label{alg:step5}
   \For{phase $i\leftarrow 1$ \KwTo $m-1$ such that $B_i\not=\emptyset$ \label{main}}{
         \While{there exists a player $u$ that either belongs to $B_i$ and has a $p$-move or belongs to $B_{i+1}$ and has a $q$-move}
         {
         $u$ deviates to the best-response strategy $s_u\leftarrow\BR_u(s_1,\ldots,s_n)$.
         }
   }
\caption{Computing approximate equilibria in congestion games.}\label{alg}
\end{algorithm}
\DecMargin{3em}

We remark that step \ref{alg:step2} partitions the players into at most $n$ non-empty blocks. Then, the for-loop at lines \ref{main}-10
enumerates only phases $i$ such that $B_i$ is non-empty, i.e., it considers at most $n$ phases.

We conclude this section with two remarks that will be treated formally in the next section. First, the selection of the boundaries of each block to be polynomially-related is crucial in order to bound the number of steps. Second, but more importantly, we notice that each player in block $B_i$ does not move after phase $i$. At the end of this phase, the algorithm guarantees that none of these players has a $p$-move to make. The most challenging part of the analysis will be to show that the players do not have any $p(1+4n^{-\psi})$-move to make after any subsequent phase. In this respect, the definition of the phases, the selection of parameter $p$ and its relation to $\theta_d(q)$ play the crucial role.

\section{Analysis of the algorithm}\label{sec:analysis}
This section is devoted to proving our main result.
\begin{theorem}\label{thm:algorithm}
For every constant $\psi>0$, the algorithm computes a $\rho_d$-approximate equilibrium for every congestion
game with polynomial latency functions of constant maximum degree $d$ and $n$
players, where $\rho_1=2+O(n^{-\psi})$ and $\rho_d\in d^{O(d)}$. Moreover, the number of player moves is at most polynomial in $n$.
\end{theorem}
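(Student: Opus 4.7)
The plan is to prove the two parts of the statement separately: (i) the algorithm terminates in a polynomial number of player moves, and (ii) the output is a $\rho_d$-approximate equilibrium with $\rho_1 = 2 + O(n^{-\psi})$ and $\rho_d \in d^{O(d)}$.

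For (i), I would bound the number of moves within each phase $i$ and multiply by the number $m \leq n$ of non-empty phases. Within phase $i$, every move is at least a $q$-move performed by a player $v \in B_i \cup B_{i+1}$. Rosenthal's potential-function property, together with $q = 1 + n^{-\psi}$, implies that each such move decreases $\Phi$ by at least $(1 - 1/q)\,c_v(S) \geq \Omega(n^{-\psi})\,\ell_v \geq \Omega(n^{-\psi})\,b_{i+2}$, where I use the monotonicity of latencies to argue $c_v(S) \geq \ell_v$. The initial potential value at the start of phase $i$ can be bounded from above by a polynomial multiple of $b_i$ using Claim \ref{claim:potential-inequalities} together with the block range $\ell_v \leq b_i$ for moving players and the polynomial growth $f_e(x) \leq n^d f_e(1)$. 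Taking the ratio yields a polynomial bound on the number of moves per phase.

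For (ii), let $S^{(i)}$ denote the state at the end of phase $i$ and let $S^{\mathrm{out}}$ be the final state. By the algorithm's exit condition, each player $u \in B_i$ has no $p$-move at $S^{(i)}$, and since $u$ is frozen afterwards, $s_u(S^{\mathrm{out}}) = s_u(S^{(i)})$. The goal reduces to showing
\begin{equation*}
c_u(S^{\mathrm{out}}) \leq (1 + 2n^{-\psi})\,c_u(S^{(i)}) \quad\text{and}\quad \min_{s \in \Sigma_u} c_u(S^{\mathrm{out}}_{-u}, s) \geq (1 - 2n^{-\psi}) \min_{s \in \Sigma_u} c_u(S^{(i)}_{-u}, s).
\end{equation*}
Combined with the $p$-approximate equilibrium condition at $S^{(i)}$ and with the choice $p = (1/\theta_d(q) - n^{-\psi})^{-1} = \theta_d(q)(1 + O(n^{-\psi}))$, these two inequalities produce the final bound $\rho_d \leq p(1 + 4n^{-\psi})$, which using Lemmas \ref{lem:ratio-linear} and \ref{lem:ratio-d} yields $\rho_1 = 2 + O(n^{-\psi})$ and $\rho_d \in d^{O(d)}$.

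The main obstacle, and the crux of the proof, is establishing the two multiplicative-stability inequalities above. The moves made after phase $i$ involve only players from $B_{i+1} \cup B_{i+2} \cup \dots$, each with optimistic cost at most $b_{i+1}$, geometrically separated from $\ell_u > b_{i+1}$ by the factor $2^{d+1} n^{2\psi + d + 1}$ between consecutive block boundaries. I would analyze each subsequent phase $j > i$ by viewing it through the subgame formalism from Section \ref{sec:prelim}: within the subgame among $B_{j+1}$ players (with all others' strategies frozen), $B_{j+1}$ reaches a $q$-approximate equilibrium at the end of phase $j$, so by Lemma \ref{lem:ratio-linear} or Lemma \ref{lem:ratio-d} applied to the subgame, its potential is bounded by $\theta_d(q)$ times a minimum value that is at most $\sum_{v \in B_{j+1}} \ell_v \leq n\, b_{j+1}$. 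This controls the total load these players can impose on any resource in $s_u$ or in an alternative strategy. Combining this load bound with the polynomial-growth estimate $f_e(x + \Delta)/f_e(x) \leq (1 + \Delta/x)^d$ and with the geometric decrease $b_{j+1}/b_j = (2^{d+1} n^{2\psi + d + 1})^{-1}$, summing the per-phase contributions over $j > i$ yields a telescoping series whose total is $O(n^{-\psi})$; this is exactly where the precise choice of the block ratio $2^{d+1} n^{2\psi + d + 1}$ is needed. The technical care required to bound simultaneously both the current cost and the best-response cost of $u$ against moves by many small players, and to verify the series sums to the target $1 + O(n^{-\psi})$ factor, is the most delicate step.
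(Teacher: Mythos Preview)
Your plan for part (i) is sound and essentially matches the paper's Lemma~\ref{lem:complexity}; the paper even remarks that a polynomial bound on moves does not require its key potential lemma.

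For part (ii), however, there is a genuine gap at the crux. You propose to bound the impact of phase $j>t$ on a player $u\in B_t$ by applying Lemma~\ref{lem:ratio-linear} or~\ref{lem:ratio-d} directly to the $B_{j+1}$ subgame at the end of phase~$j$. This yields at best $\Phi_{B_{j+1}}(S^{(j)})\le \theta_d(q)\cdot n^{d+1}b_{j+1}$ (your $n\,b_{j+1}$ is too optimistic, since the subgame uses modified latencies that already include the frozen players). The problem is the very first subsequent phase $j=t+1$: the movers there include players in $B_{t+1}$, and your bound on their subgame potential is of order $\theta_d(q)\,n^{d+1}b_{t+1}$, which is a polynomial factor \emph{larger} than $b_{t+1}\approx \ell_u$. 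Hence the additive change to $c_u$ in that single phase can be $\Theta(n^{d+1})\cdot b_{t+1}$, and no geometric tail in later $b_j$'s can rescue the $1+O(n^{-\psi})$ multiplicative stability you need. The block ratio $2^{d+1}n^{2\psi+d+1}$ separates $b_{t+1}$ from $b_{t+2}$, not from $b_{t+1}$ itself.

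What the paper does instead is its Lemma~\ref{lem:potential}: it proves $\Phi_{R_i}(S^{i-1})\le \frac{b_i}{2^d n^{\psi}}$, a bound that is a polynomial factor \emph{smaller} than $b_i$. This cannot come from a direct application of the $\theta_d(q)$ ratio. The argument is by contradiction and exploits a feedback between two consecutive phases: the players in $R_i\cap B_i$ are already in $q$-approximate equilibrium at $S^{i-1}$ (so their subgame potential is within $\theta_d(q)$ of its minimum), yet during phase $i$ they make $p$-moves, each dropping the potential by at least $(p-1)$ times their post-move cost. If $\Phi_{R_i}(S^{i-1})$ were larger than $\frac{b_i}{2^d n^{\psi}}$, these $p$-moves would drive the $R_i\cap B_i$ subgame potential strictly below $\frac{1}{\theta_d(q)}\Phi_{R_i\cap B_i}(S^{i-1})$, contradicting the $q$-equilibrium property from the previous phase. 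The careful choice $p=(1/\theta_d(q)-n^{-\psi})^{-1}$ is precisely what makes this contradiction close. Only with this sublinear-in-$b_i$ potential bound do the per-phase additive perturbations $\frac{b_{i+1}}{n^{\psi}}$ (the paper's inequalities (\ref{eq:1}) and (\ref{eq:2})) become small enough, relative to $\ell_u>b_{t+1}$, to sum to the desired $p(1+4n^{-\psi})$ guarantee.
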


\begin{proof}
We denote by $S^0$ the state computed at step \ref{alg:step5} of the algorithm
(player $u$ plays strategy $\BR_u(\vo)$) and by $S^i$ the state after the execution
of phase $i$ for $i\geq 1$. Within each phase $i$, we denote by $R_i$ the set of
players that make at least one move during the phase. Recall that the players of block $B_i$
are those with optimistic cost $\ell_u\in (b_{i+1}, b_i]$ and that $b_i=2^{d+1}n^{2\psi+d+1} b_{i+1}$, for $i=1, ..., m$.

The proof of the theorem follows by a series of lemmas. The most crucial one is
Lemma \ref{lem:potential} where we show that the potential $\Phi_{R_i}(S^{i-1})$
of the subgame among the players in $R_i$ at the beginning of phase $i\geq 2$ is
significantly smaller than $b_i$. In general, players that move during phase $i$
experience cost that is polynomially related to $b_i$ and each of them decreases her cost (and,
consequently, the potential) by a quantity that is also polynomially related to
$b_i$. This argument is used in Lemma \ref{lem:complexity} (together with
Lemma \ref{lem:potential}) in order to show that the number of steps of the
algorithm is polynomial in $n$. More importantly, Lemma \ref{lem:potential} is
used in the proof of Lemma \ref{lem:approx} in order to show that players in
block $B_i$ are not affected significantly after phase $i$ (notice that players in
$B_i$ do not move after phase $i$). Using this lemma, we conclude in
Lemma \ref{lem:approx-summary} that the players are in a
$p(1+4n^{-\psi})$-approximate equilibrium after the execution of the algorithm.
The statement of the theorem then follows by taking into account the parameters
of the algorithm.

Let us warm up with the following lemma (to be used in the proof of Lemma \ref{lem:potential}) that relates the potential
$\Phi_{R_i}(S^i)$ with the latency the players in $R_i$ experience when they
make their last move within phase $i$.

\begin{lemma}\label{lem:cost-potential}
Let $c(u)$ denote the cost of player $u\in R_i$ just after making
her last move within phase $i$. Then,
\begin{eqnarray*}
\Phi_{R_i}(S^i)\leq \sum_{u\in R_i}{c(u)}.
\end{eqnarray*}
\end{lemma}

We now present the key lemma of our proof.

\begin{lemma}\label{lem:potential}
For every phase $i\geq 2$, it holds that $\Phi_{R_i}(S^{i-1}) \leq
\frac{b_{i}}{2^d n^\psi}$.
\end{lemma}

\begin{proof}
Assume the contrary, that $\Phi_{R_{i}}(S^{i-1})> \frac{b_i}{2^d
n^\psi}$. We will show that state $S^{i-1}$ would not be a
$q$-approximate equilibrium for the players in $R_{i}\cap B_{i}$,
which contradicts the definition of phase $i-1$ of the algorithm.

First observe that a player $u$ in $B_{i+1}$ is assigned to the
strategy $\BR_u(\vo)$ in the beginning of the algorithm and does not
move during the first $i-1$ phases. Hence, by the definition of the
latency functions, she does not experience a cost more than $n^d
b_{i+1}$ at state $S^{i-1}$. Hence, the potential $\Phi_{R_i\cap
B_{i+1}}(S^{i-1})$, which is upper-bounded by the total cost of
players in $R_i\cap B_{i+1}$, satisfies
\begin{eqnarray}\label{eq:light-players}
\Phi_{R_i\cap B_{i+1}}(S^{i-1}) \leq n^{d+1} b_{i+1}.
\end{eqnarray}

We now use the fact $\Phi_{R_{i}}(S^{i-1})\leq \Phi_{R_{i}\cap
B_{i}}(S^{i-1})+\Phi_{R_{i}\cap B_{i+1}}(S^{i-1})$ (see Claim 2.2), inequality
\eqref{eq:light-players}, and the assumption $\Phi_{R_{i}}(S^{i-1})>
\frac{b_i}{2^d n^\psi}$ to obtain
\begin{eqnarray}\nonumber
\Phi_{R_{i}\cap B_{i}}(S^{i-1}) &\geq &
\Phi_{R_{i}}(S^{i-1})-\Phi_{R_{i}\cap B_{i+1}}(S^{i-1})\\\nonumber
&> & \frac{b_i}{2^dn^{\psi}}-n^{d+1} b_{i+1}\\\nonumber &=&
\left(\frac{2^{d+1}n^{2\psi+d+1}}{2^dn^\psi}-n^{d+1}\right)
b_{i+1}\\\label{eq:3} &\geq & n^{\psi+d+1}b_{i+1}.
\end{eqnarray}

Further, we consider the dynamics of the subgame among the players in
$R_{i}$ at phase $i$. For each player $u$ in $R_{i}$, we denote by
$c(u)$ the cost player $u$ experiences just after she makes her last move in
phase $i$. Observe that every player $u$ in $B_{i}\cap
R_{i}$ decreases the potential of the subgame among the players of
$R_{i}$ by at least $(p-1)c(u)$ when she performs her last $p$-move.
Hence,
\begin{eqnarray}\nonumber
(p-1)\sum_{u\in R_{i}\cap B_{i}}{c(u)} &\leq &
\Phi_{R_{i}}(S^{i-1})-\Phi_{R_{i}}(S^{i}) \\\nonumber &\leq &
\Phi_{R_{i}\cap B_{i}}(S^{i-1})+\Phi_{R_{i}\cap
B_{i+1}}(S^{i-1})-\Phi_{R_{i}}(S^{i}) \\\nonumber &\leq &
\Phi_{R_{i}\cap B_{i}}(S^{i-1})+n^{d+1} b_{i+1}-\Phi_{R_{i}}(S^{i}) \\\label{eq:4} &< &
\left(1+\frac{1}{n^\psi}\right)\Phi_{R_{i}\cap
B_{i}}(S^{i-1})-\Phi_{R_{i}}(S^{i}).
\end{eqnarray}
The last three inequalities follow by Claim 2.2 and inequalities
\eqref{eq:light-players} and \eqref{eq:3}, respectively.

Furthermore, since each player $u$ in $R_i\cap B_{i+1}$ plays a
best-response during phase $i$, her cost after her last move will be
at most the cost she would experience by deviating to strategy
$\BR_u(\vo)$, which is at most $n^d b_{i+1}$. Then, the total cost of the
players of $R_i\cap B_{i+1}$ is at most $n^{d+1}b_{i+1}$. Now, using
Lemma 4.2, the last
observation, inequalities
\eqref{eq:4} and \eqref{eq:3}, we obtain
\begin{eqnarray*}
\Phi_{R_{i}}(S^{i}) &\leq & \sum_{u\in R_{i}}{c(u)}\\
&=& \sum_{u\in R_{i}\cap B_{i+1}}{c(u)} + \sum_{u\in R_{i}\cap B_{i}}{c(u)}\\
&< & n^{d+1} b_{i+1}+\frac{1}{p-1}\left(1+\frac{1}{n^\psi}\right)\Phi_{R_{i}\cap B_{i}}(S^{i-1})-\frac{1}{p-1}\Phi_{R_{i}}(S^{i})\\
&\leq & \frac{1}{p-1}\left(1+\frac{p}{n^\psi}\right)\Phi_{R_{i}\cap
B_{i}}(S^{i-1})-\frac{1}{p-1}\Phi_{R_{i}}(S^{i})
\end{eqnarray*}
which implies that
\begin{eqnarray}\label{eq:new1}
\Phi_{R_{i}}(S^{i}) & < & \left(\frac{1}{p}+\frac{1}{n^\psi}\right)\Phi_{R_{i}\cap B_{i}}(S^{i-1}).
\end{eqnarray}

Now, let $S^*$ be the state in which the players in $R_i\cap B_i$ play their strategies in $S^i$ and the players in $R_i\cap B_{i+1}$ (as well as every other player) play their strategies in $S^{i-1}$. Consider the deviation of each player $u$ in $R_i\cap B_{i+1}$ from her strategy in $S^i$ to her strategy $\BR_u(\vo)$ in $S^*$. Recall that the cost each player $u$ in $R_i\cap B_{i+1}$ experiences when playing strategy $\BR_u(\vo)$ is at most $n^db_{i+1}$ which means that the increase her deviation incurs to the potential of the subgame among the players in $R_i$ is at most $n^db_{i+1}$. Hence,
\begin{eqnarray}\label{eq:new2}
\Phi_{R_i}(S^*) &\leq & \Phi_{R_i}(S^i)+n^{d+1}b_{i+1}.
\end{eqnarray}
Now, using the fact that $\Phi_{R_i\cap B_i}(S^*) \leq  \Phi_{R_i}(S^*)$, together with inequalities (\ref{eq:new2}), (\ref{eq:new1}), and (\ref{eq:3}), we have
\begin{eqnarray*}
\Phi_{R_i\cap B_i}(S^*) &\leq & \Phi_{R_i}(S^*)\\
&\leq & \Phi_{R_i}(S^i)+n^{d+1} b_{i+1}\\
&< & \left(\frac{1}{p}+\frac{2}{n^\psi}\right) \Phi_{R_i\cap B_i}(S^{i-1})\\
&=& \frac{1}{\theta_d(q)} \Phi_{R_i\cap B_i}(S^{i-1}).
\end{eqnarray*}
The last inequality implies that the global minimum of the potential
value of the subgame among the players of $R_{i}\cap B_{i}$ (when all other players are frozen to their strategies in $S^{i-1}$) is strictly
smaller than $\frac{1}{\theta_d(q)} \Phi_{R_{i}\cap
B_{i}}(S^{i-1})$. Due to the definition of $\theta_d(q)$ and Lemmas 2.3 and 2.4, this contradicts the fact that
$S^{i-1}$ is a $q$-approximate equilibrium for the players
in $R_{i}\cap B_{i}$.
\end{proof}

We are ready to bound the number of best-response moves. As a matter of fact,
our upper bound is dominated by the number of best-response moves in the very first phase
of the algorithm. We remark that a weaker result could be obtained without
resorting to Lemma \ref{lem:potential}.

\begin{lemma}\label{lem:complexity}
The algorithm terminates after at most $O\left(n^{5\psi+3d+3}\right)$ best-response moves.
\end{lemma}

\begin{proof}
We will upper-bound the total number of moves during the execution of the algorithm.
After the first $n$ best-response moves in line \ref{alg:step5}, the number of phases executed by the algorithm is at most $n$. At the beginning of the first phase, the latency of any player in $R_1$ is at most $n^d b_1$ (due to the definition of block $B_1$ and of the latency functions). Hence, $\Phi_{R_1}(S^0)\leq \sum_{u\in R_1}{c_u(S^0)}\leq n^{d+1} b_1$. The minimum latency experienced by any player in $R_1$ is at least $b_3$, so each move in this step decreases the potential $\Phi_{R_1}$ by at least $(q-1)b_3$. So the total number of moves is at most $\frac{n^{d+1} b_1}{(q-1)b_3} = 2^{2d+2}n^{5\psi+3d+3}$.

At the beginning of any other phase $i\geq 2$, we have that $\Phi_{R_i}(S^{i-1})\leq \frac{b_i}{2^d n^\psi}$ (by Lemma \ref{lem:potential}). The minimum latency experienced by any player in $R_i$ is at least $b_{i+2}$, so each move in this step decreases the potential $\Phi_{R_i}$ by at least $(q-1)b_{i+2}$. So the total number of moves is at most $\frac{b_i}{2^d n^\psi (q-1)b_3} = 2^{d+2}n^{4\psi+2d+2}$.

In total, we have $O\left(n^{5\psi+3d+3}\right)$ best-response moves.
\end{proof}

The proof of the following lemma strongly relies on Lemma \ref{lem:potential}.
Intuitively, Lemma \ref{lem:potential} implies that the cost experienced
by any player of $R_i$ while moving during phase $i$ is considerably lower than
the cost of players in blocks $B_{1},\ldots, B_{i-1}$ (who are not supposed to move anymore). The latter means that, for every player $u$ in
$B_{1},\ldots, B_{i-1}$, after phase $i$, neither the cost of $u$ may increase
considerably, nor the cost that $u$ could experience by a possible deviation may decrease
considerably.


\begin{lemma}\label{lem:approx}
Let $u$ be a player in the block $B_t$, where $t\leq m-2$. Let
$s_{u}'$ be a strategy different from the one assigned to $u$ by the
algorithm at the end of phase $t$. Then, for each phase $i\geq t$,
it holds that
\begin{eqnarray*}
c_u(S^i) &\leq & p\cdot
c_u(S^i_{-u},s'_u)+\frac{p+1}{n^\psi}\sum_{k=t+1}^i{b_k}.
\end{eqnarray*}
\end{lemma}

\begin{proof}
We will prove the lemma using induction on $i$. For $i=t$, the
claim follows by the definition of phase $i$ of the algorithm.
Assume that the claim is true for a phase $i$ with $t\leq i\leq
m-2$. In the following, we show that the claim is true for the phase
$i+1$ as well.

First, we show that if
\begin{eqnarray}\label{eq:1}
c_u(S^{i+1}) &\leq & c_u(S^{i})+\frac{b_{i+1}}{n^{\psi}}
\end{eqnarray}
and
\begin{eqnarray}\label{eq:2}
c_u(S^{i}_{-u},s'_u) &\leq &
c_u(S^{i+1}_{-u},s'_u)+\frac{b_{i+1}}{n^{\psi}}
\end{eqnarray}
then the claim holds. By the hypothesis of induction, we have
\begin{eqnarray*}
c_u(S^{i}) &\leq & p\cdot
c_u(S^{i}_{-u},s'_u)+\frac{p+1}{n^\psi}\sum_{k=t+1}^i{b_k}.
\end{eqnarray*}
Combining the above three inequalities, we obtain that

\begin{eqnarray*}
c_u(S^{i+1}) &\leq & c_u(S^i)+\frac{b_{i+1}}{n^{\psi}}\\
&\leq & p\cdot c_u(S^{i}_{-u},s'_u)+\frac{p+1}{n^\psi}\sum_{k=t+1}^i{b_k}+\frac{b_{i+1}}{n^{\psi}}\\
&\leq & p\cdot
c_u(S^{i+1}_{-u},s'_u)+\frac{p+1}{n^\psi}\sum_{k=t+1}^{i+1}{b_k},
\end{eqnarray*}
as desired.

In order to complete the proof of the inductive step we are left to prove \eqref{eq:1} and \eqref{eq:2}. We do so by proving that if one of these two inequalities does not hold, this would violate the statement of Lemma \ref{lem:potential}.

%
%
%

Assume that \eqref{eq:1} does not hold, i.e., $c_u(S^{i+1}) >
c_u(S^{i})+\frac{b_{i+1}}{n^{\psi}}$ for some player $u$ of block $B_t$, where $t\le i$.
We will show that the potential $\Phi_{R_{i+1}}(S^{i+1})$ at state $S^{i+1}$ of the subgame among the players in $R_{i+1}$ is larger than $\frac{b_{i+1}}{2^dn^\psi}$. Since the potential decreases during phase $i+1$, $\Phi_{R_{i+1}}(S^{i})$ should also be larger than $\frac{b_{i+1}}{2^dn^\psi}$, contradicting Lemma \ref{lem:potential}. Indeed, since player $u$ does not move during phase $i+1$, the increase in her cost from state $S^i$ to state $S^{i+1}$ implies the existence of a set of resources $C\subseteq s_u$ in her strategy with the following properties: each resource $e\in C$ is also used by at least one player of $R_{i+1}$ in state $S^{i+1}$ and, furthermore, $\sum_{e\in C}{f_e(n_e(S^{i+1}))} > \frac{b_{i+1}}{n^\psi}$. By Claim \ref{claim:potential-inequalities}, we obtain that $\Phi_{R_{i+1}}(S^{i+1})>\frac{b_{i+1}}{n^\psi}$.

Similarly, assume that \eqref{eq:2} does not hold for a player $u$ of block $B_t$ and a strategy
$s'_u$ that is different from $s_u$, the strategy assigned to $u$ in phase $t$, i.e.,
$c_u(S^{i}_{-u},s'_u) > c_u(S^{i+1}_{-u},s'_u)+\frac{b_{i+1}}{n^{\psi}}$. Recall that player $u$ does not move during phase $i+1$.
This implies that there exists a set of resources $C\subseteq s'_u$ with the following properties: each
resource $e\in C$ is used by at least one player of $R_{i+1}$ in state $S^i$ and, furthermore,
$\sum_{e\in C}{f_e(n_e(S^i_{-u},s'_u))} \geq \frac{b_{i+1}}{n^\psi}$.
Hence, by Claim \ref{claim:potential-inequalities} and the definition of the latency functions, we have $\Phi_{R+1}(S^i) \geq \sum_{e\in C}{f_e(n_e(S^i))} \geq \sum_{e\in C}{\frac{1}{2^d}f_e(n_e(S^i_{-u},s'_u))} > \frac{b_{i+1}}{2^d n^\psi}$. Again, this contradicts Lemma \ref{lem:potential}.

Hence, \eqref{eq:1} and \eqref{eq:2} hold and the proof of the
inductive step is complete.
\end{proof}

The next lemma follows easily by Lemma \ref{lem:approx}, the definition of $b_i$'s, and the definition of the last phase of the algorithm.

\begin{lemma}\label{lem:approx-summary}
The state computed by the algorithm is a $p\left(1+\frac{4}{n^{\psi}}\right)$-approximate equilibrium.
\end{lemma}

\begin{proof}
We have to show that in the state $S^{m-1}$, computed by the algorithm after the last phase, no
player has an incentive to deviate to another strategy in order to
decrease her cost by a factor of
$p\left(1+\frac{4}{n^{\psi}}\right)$. The claim is certainly true
for the players in the blocks $B_{m-1}$ and $B_m$ by the
definition of the last phase of the algorithm. Let $u$ be a player
in block $B_t$ with $t\leq m-2$ and let $s'_u$ be any
strategy different from the one assigned to $u$ by the algorithm
after phase $t$. We apply Lemma \ref{lem:approx} to player $u$. By the definition of $b_i$'s, we have
$\sum_{k=t+1}^{m}{b_k}\leq 2b_{t+1}$. Also,
$c_u(S^{m-1}_{-u},s'_u)\geq b_{t+1}$, since $u$ belongs to block
$B_t$. Hence, Lemma \ref{lem:approx} implies that
\begin{eqnarray*}
c_u(S^{m-1}) &\leq & p\cdot c_u(S^{m-1}_{-u},s'_u)+\frac{2(p+1)}{n^\psi} c_u(S^{m-1}_{-u},s'_u)\\
&\leq & p\left(1+\frac{4}{n^\psi}\right) c_u(S^{m-1}_{-u},s'_u),
\end{eqnarray*}
as desired. The last inequality follows since $p\geq 1$.
\end{proof}

By the definition of the parameters $q$ and $p$ in our algorithm, we obtain that the state computed is a $\rho_d$-approximate equilibrium with
\begin{eqnarray*}
\rho_d &\leq & \left(\frac{1}{\theta_d(q)}-n^{-\psi}\right)^{-1}\left(1+\frac{4}{n^{\psi}}\right),
\end{eqnarray*}
where $\theta_1(q)=\frac{2q}{2-q}$, $\theta_d(q)\in d^{O(d)}$ and $q=1+n^{-\psi}$. By making simple calculations, we obtain that $\rho_1\leq 2+O(n^{-\psi})$ and $\rho_d\in d^{O(d)}$. This completes the proof of Theorem \ref{thm:algorithm}.
\end{proof}

\section{Discussion and open problems}\label{sec:open}
We remark that the number of best response moves computed by our algorithm depends neither on the number of the resources nor on the number of strategies per player. In fact, our algorithm delegates to the players the computation of their best-response move; the overall running time then depends also on the time required by the players to compute a best-response move from any state of the game and (pseudo-)state $\vo$. Of course, the players are expected to be able to do this computation efficiently. 

The guarantee of our algorithm depends strongly on the fact that the latency functions have non-negative coefficients. Is this a severe limitation? We answer this question negatively in the next theorem where we prove that the problem of computing approximate equilibria is {\sf PLS}-complete for congestion games with linear latency functions that have negative offsets (but incurring non-negative latency to any player using the corresponding resource).


\begin{theorem}\label{theorem:pls}
  Finding an $\rho$-approximate equilibrium in a congestion game with
  linear laency functions with negative coefficients is {\sf PLS}-complete, for every
  polynomial-time computable $\rho>1$.
\end{theorem}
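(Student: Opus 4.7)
PLS membership is immediate: Rosenthal's potential serves as the local-search objective, and checking whether any player has a $\rho$-improving deviation takes polynomial time. The content of the theorem is therefore the hardness direction, which we establish by reducing from the construction of Skopalik and V\"ocking.

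The SV reduction shows PLS-hardness of computing $\rho$-approximate equilibria for congestion games whose gadget resources exhibit latencies that jump by an arbitrarily large factor between consecutive loads; this sharp thresholding is what lets them force a correspondence between local optima of the source PLS problem and $\rho$-approximate equilibria. Our plan is to re-implement every such SV threshold resource using only linear latencies $f_e(x) = a_e x + b_e$ with $a_e > 0$ and $b_e < 0$, subject to $f_e(x) \geq 0$ on the loads that occur. Although a linear function with $b_e \geq 0$ satisfies $f_e(x+1)/f_e(x) \leq 2$, allowing $b_e < 0$ breaks that bound: choosing $b_e = -(a_e - \epsilon)$ gives $f_e(1) = \epsilon$ while $f_e(2) = a_e + \epsilon$, so the ratio $f_e(2)/f_e(1)$ is of order $a_e/\epsilon$ and can be made as large as the reduction requires using only polynomially many bits.

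To realize an SV threshold that should fire at load $k > 1$ rather than at $2$, the plan is to pair such a steep linear-with-negative-offset resource with a bundle of dummy players whose unique stable choice forces $k-1$ of them permanently onto the resource. Each dummy is attached to a very cheap private ``home'' resource and to the target resource only via a backup strategy that is never her best response; auxiliary gadgets (built in the same style as in SV) ensure that exactly $k-1$ dummies occupy the target resource in every state the reduction reaches. This is precisely the freezing construction of Section \ref{sec:prelim} used in reverse: from the point of view of the real players, the shifted resource behaves as though its first relevant user were the $k$-th one, and its apparent marginal cost jumps by roughly $a_e / \epsilon$ at that point, reproducing the SV gadget to arbitrary precision.

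The main obstacle, and where the ``significant extension'' of the reduction lives, is the bookkeeping needed to push this simulation through the entire SV construction without opening up spurious approximate moves. Three things must be checked: (i) the polynomially many linear-with-negative-offset resources replacing a single SV gadget have aggregate latency matching the original at every load encountered in the reduction, to within a factor strictly smaller than $\rho$; (ii) the dummy players added for load shifting never gain a $\rho$-improving deviation themselves, which requires the cheap home resources to dominate any backup profile by more than a factor of $\rho$ in every reachable state; and (iii) coefficient magnitudes and the total numbers of players and resources remain polynomial in the SV input even when $\rho$ is an arbitrary polynomial-time computable function, which forces the negative offsets $b_e$ to be scaled carefully with the precision demanded by $\rho$. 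I expect point (iii) to be the hardest, since it couples the choice of $\epsilon$ in every simulated threshold with the global approximation factor and cannot be decoupled gadget by gadget; once it is settled, the correspondence between local optima of the source PLS problem and $\rho$-approximate equilibria of the constructed linear-with-negative-offset game follows by mirroring the SV argument.
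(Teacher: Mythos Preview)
Your plan diverges from the paper's and, as stated, has a genuine gap.

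The paper does \emph{not} try to simulate the high-load SV thresholds by load-shifting with dummy players. Instead it restructures the SV game so that \emph{every} resource lies in the strategies of at most two players; the technical work is to replace the multi-user \texttt{Bit} and \texttt{Lock} resources by per-player copies mediated by new \texttt{Lock}$G_k$ proxy players, and to add a second \texttt{Reset} strategy for the Controller so that the superstep argument still goes through. Once the game has this at-most-two-users property, a resource is fully described by the pair $f_e(1)=a$, $f_e(2)=b$, which is realized by the linear function $f_e(x)=(b-a)x+(2a-b)$; the offset $2a-b$ is negative precisely when $b>2a$, and both displayed values are non-negative by construction.

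Your load-shifting idea runs into the non-negativity constraint. To make a linear $f_e(x)=a_ex+b_e$ jump sharply at load $k$ you need $f_e(k)=\epsilon$ and $f_e(k+1)=a_e+\epsilon$, i.e.\ $b_e=\epsilon-a_ek$. But then $f_e(k-1)=\epsilon-a_e<0$, and load $k-1$ \emph{does} occur: it is exactly the load when only your $k-1$ dummies sit on the resource and no real player does. The theorem explicitly requires the latency to be non-negative at every load any player experiences, so this configuration is illegal. Pushing the jump to between $k-1$ and $k$ instead makes the resource look uniformly expensive to every real player and destroys the threshold. There is also an internal inconsistency in your dummy gadget: you say the dummies' home resource is their best response, yet you need them permanently \emph{on} the target resource; whichever way you resolve this, the negativity issue above bites. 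The paper's at-most-two-users restructuring is precisely what avoids ever needing a sharp jump at load $k>2$.
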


The reduction yields a congestion game in which every resource is contained in strategies of at most two players. It can also be turned into a congestion game with polynomial latency functions that have degree polynomial in $n$ (see Section \ref{sec:PLS}).

Our work reveals several open problems. The most challenging one is whether the guarantee for approximate equilibria that can be computed efficiently can be improved. For example, can we compute $(1+\epsilon)$-approximate equilibria in congestion games with linear latency functions in polynomial time for every (polynomially small) $\epsilon>0$? We believe that this is not the case and our algorithm is close to optimal in this sense. It would be very interesting to see how the best possible approximation guarantee relates to the worst-case ratio of the potential at an almost exact equilibrium over the minimum potential. Here, we point out that we have examples of congestion games for which the upper bound of $2$, provided by Lemma \ref{lem:ratio-linear}, is tight when $q$ approaches $1$. Extending this question to polynomial latencies is interesting as well. Note that a nice consequence of our work is that, besides being approximate equilibria, the states computed have low price of anarchy as well (e.g., at least $7.33+O(\epsilon)$ for linear latency functions according to the bounds in \cite{ChristodoulouKS09}). Providing improved guarantees for the social cost of approximate equilibria that can be computed efficiently or related trade-offs is another interesting line of research. Finally, we strongly believe that our techniques could be applicable to other potential games as well. Typical examples include constraint satisfaction games such as the cut and parity games studied in \cite{BhalgatCK10}; we plan to consider such games in future work.


\appendix

\section{Proofs omitted from Sections \ref{sec:prelim} and \ref{sec:analysis}}

\paragraph{Proof of Claim \ref{claim:potential-inequalities}.}
The first inequality follows easily by the definition of function
$\Phi$. The second one can be obtained by the following derivation:
\begin{eqnarray*}
\Phi(S) &=&\sum_{e\in E}\sum_{j=1}^{n_e(S)}f_e(j)\\
&\leq & \sum_{e\in E}n_e(S)\cdot f_e\Big(n_e(S)\Big)\\
&=& \sum_{u\in N}\sum_{e\in s_u}\cdot f_e\Big(n_e(S)\Big)\\
&=&\sum_{u\in N}{c_u(S)}.
\end{eqnarray*}
\qed

\paragraph{Proof of Claim \ref{claim:potential-sum}.}
We use the definition of the potential function for the original game
and the subgames, the definitions of the modified latency functions
$f^F_e(x)=f_e(x+n^{N\setminus F}_e(S))$ and $f^{N\setminus
F}_e(x)=f_e(x+n^F_e(S))$, and the equality
$n_e(S)=n^F_e(S)+n^{N\setminus F}_e(S)$ to obtain

\begin{eqnarray*}
\Phi(S) &=& \sum_{e\in E}{\sum_{j=1}^{n_e(S)}{f_e(j)}}\\
&=& \sum_{e\in E}{\sum_{j=1}^{n^F_e(S)}{f_e(j)}}+\sum_{e\in E}{\sum_{j=n^F_e(S)+1}^{n_e(S)}{f_e(j)}}\\
&\leq & \sum_{e\in E}{\sum_{j=1}^{n^F_e(S)}{f_e(j+n^{N\setminus F}_e(S))}}+\sum_{e\in E}{\sum_{j=n^F_e(S)+1}^{n_e(S)}{f_e(j)}}\\
&=& \sum_{e\in E}{\sum_{j=1}^{n^F_e(S)}{f^F_e(j)}}+\sum_{e\in E}{\sum_{j=1}^{n^{N\setminus F}_e(S)}{f_e(j+n^F_e(S))}}\\
&=& \sum_{e\in E}{\sum_{j=1}^{n^F_e(S)}{f^F_e(j)}}+\sum_{e\in E}{\sum_{j=1}^{n^{N\setminus F}_e(S)}{f^{N\setminus F}_e(j)}}\\
&=& \Phi_F(S)+\Phi_{N\setminus F}(S),
\end{eqnarray*}
as desired for the first part of the claim. For the second part, we have
\begin{eqnarray*}
\Phi(S) &=& \sum_{e\in E}{\sum_{j=1}^{n_e(S)}{f_e(j)}}\\
&\geq& \sum_{e\in E}{\sum_{j=n^{N\setminus F}_e(S)+1}^{n_e(S)}{f_e(j)}}\\
&=& \sum_{e\in E}{\sum_{j=1}^{n^F_e(S)}{f_e(j+n^{N\setminus F}_e(S))}}\\
&=& \sum_{e\in E}{\sum_{j=1}^{n^F_e(S)}{f^F_e(j)}}\\
&=& \Phi_F(S).
\end{eqnarray*}
\qed

\paragraph{Proof of Lemma \ref{lem:ratio-linear}.}
In the proof, we will need the following technical claim.
\begin{claim}\label{claim:integers}
For every non-negative integers $x,y$, it holds true $xy\leq \frac{1}{2}x^2-\frac{1}{2}x+y^2$.
\end{claim}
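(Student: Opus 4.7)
}
My plan is to reduce the inequality to a pure ``sum of squares beats a linear term'' statement, and then exploit the fact that $x$ and $y$ are integers. First I would clear the $\frac12$ by multiplying through by $2$, reducing the claim to
\[
x^2 - 2xy + 2y^2 - x \;\geq\; 0.
\]
The main rewriting step is the simple identity $x^2 - 2xy + 2y^2 = (x-y)^2 + y^2$, which turns the target inequality into
\[
(x-y)^2 + y^2 \;\geq\; x.
\]

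At this point the algebraic part is essentially done and only a small integer argument remains. The key lemma I would invoke is that for \emph{any} integer $n$ one has $n^2 \geq n$: if $n \leq 0$ this is immediate since $n^2 \geq 0 \geq n$, and if $n \geq 1$ it is $n(n-1) \geq 0$. I would apply this twice, once to the integer $n = x-y$ to get $(x-y)^2 \geq x-y$, and once to the non-negative integer $n = y$ to get $y^2 \geq y$. Adding the two inequalities yields $(x-y)^2 + y^2 \geq (x-y) + y = x$, which is exactly what we needed.

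The one subtle point — and the only place the integrality hypothesis is actually used — is the inequality $y^2 \geq y$, which would fail for real $y \in (0,1)$. For the term $(x-y)^2 \geq x-y$, only the integrality of $x-y$ matters when $x > y$; for $x \leq y$ the inequality holds trivially over the reals. So I do not expect any real obstacle: the whole proof amounts to the identity $x^2 - 2xy + 2y^2 = (x-y)^2 + y^2$ plus two applications of $n^2 \geq n$ for integer $n$, and then adding and dividing by $2$ recovers the stated bound $xy \leq \tfrac12 x^2 - \tfrac12 x + y^2$.
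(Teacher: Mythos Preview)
Your proof is correct. It differs from the paper's argument: the paper completes the square with $\bigl(\tfrac{x}{2}-y\bigr)^2 \geq 0$, which gives $xy \leq \tfrac14 x^2 + y^2$, and then uses the integer bound $\tfrac14 x^2 \leq \tfrac12 x^2 - \tfrac12 x$ to finish; since that last step fails for $x=1$, the paper treats $x=1$ as a separate base case. Your decomposition $(x-y)^2 + y^2$ together with two applications of the integer inequality $n^2 \geq n$ sidesteps that case split entirely and makes the role of integrality more transparent. Both arguments are short; yours is a bit more uniform, while the paper's makes explicit the ``real'' inequality $xy \leq \tfrac14 x^2 + y^2$ that lies underneath.
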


\begin{proof}
For $x=1$, the claim clearly holds. Otherwise, observe that $\frac{1}{2}x^2-\frac{1}{2}x\geq \frac{1}{4}x^2$.
Then $0\leq (\frac{x}{2}-y)^2=\frac{1}{4}x^2+y^2-xy\leq\frac{1}{2}x^2-\frac{1}{2}x+y^2-xy$ and claim follows.
\end{proof}

For each player $u$ we denote by $s_u$ and $s^*_u$ the strategies she uses at states $S$ and $S^*$, respectively. Using the $q$-approximate equilibrium condition, that is $c_u(S)\leq q\cdot c_u(S_{-u},s^*_u)$, we obtain
\begin{eqnarray*}
\sum_{e\in s_u}{(a_{e,1}\cdot n_e(S)+a_{e,0})} & \leq & q\sum_{e\in s^*_u}{(a_{e,1}\cdot n_e(S_{-u},s^*_u)+a_{e,0})}
\end{eqnarray*}
for each player $u\in N$. Summing over all players, we get that their total cost is
\begin{eqnarray}\nonumber
\sum_{u\in N}{\sum_{e\in s_u}{(a_{e,1}\cdot n_e(S)+a_{e,0})}} & \leq & q\sum_{u\in N}{\sum_{e\in s^*_u}{(a_{e,1}\cdot n_e(S_{-u},s^*_u)+a_{e,0})}}\\\nonumber
&\leq & q\sum_{u\in N}{\sum_{e\in s^*_u}{(a_{e,1}\cdot(n_e(S)+1)+a_{e,0})}}\\\label{eq:cost-bound}
&= & q\sum_{e \in E}{(a_{e,1}\cdot n_e(S^*)(n_e(S)+1)+a_{e,0}\cdot n_e(S^*))}
\end{eqnarray}
In the following, we use the definitions of the potential and the latency functions, the fact that $q\geq 1$, inequality \eqref{eq:cost-bound} $\bigg(n_e(S)\cdot n_e(S^*)\le\frac{1}{2}n_{e}(S)^{2}-\frac{1}{2}n_{e}(S)+n_{e}(S^*)^{2}\bigg)$, and Claim \ref{claim:integers} to obtain
\begin{eqnarray*}
\Phi(S) &=& \sum_{e\in E}{\sum_{j=1}^{n_e(S)}{f_e(j)}}\\
&=&\sum_{e\in E}{\bigg(\frac{1}{2}a_{e,1}\cdot n_e(S)^2+\frac{1}{2}a_{e,1}\cdot n_e(S)+a_{e,0}\cdot n_e(S)\bigg)}\\
&=& \sum_{e\in E}{\bigg(\frac{1}{2}a_{e,1}\cdot n_e(S)^2+\frac{1}{2}a_{e,0}\cdot n_e(S)\bigg)}+
\sum_{e\in E}{\frac{1}{2}a_{e,1}\cdot n_e(S)}+ \sum_{e\in E}{\frac{1}{2}a_{e,0}\cdot n_e(S)}\\
&\leq & \frac{1}{2}\sum_{u\in N}\sum_{e\in s_u}{\bigg(a_{e,1}\cdot n_e(S)+a_{e,0}\bigg)}+\frac{q}{2}\sum_{e\in E}{a_{e,1}n_e(S)}+\frac{q}{2}\sum_{e\in E}{a_{e,0}n_e(S)}\\
& \leq & \frac{q}{2}\sum_{e \in E}{\bigg(a_{e,1}\cdot n_e(S^*)(n_e(S)+1)+a_{e,0}\cdot n_e(S^*)\bigg)}+
\frac{q}{2}\sum_{e\in E}{a_{e,1}\cdot n_e(S)}+\frac{q}{2}\sum_{e\in E}{a_{e,0}\cdot n_e(S)}\\
&\leq & \frac{q}{2}\sum_{e\in E}{a_{e,1}\bigg(\frac{1}{2}n_e(S)^2+\frac{1}{2}n_e(S)+n_e(S^*)^2+n_e(S^*)\bigg)}\\
&& +\frac{q}{2}\sum_{e\in E}{a_{e,0}\cdot n_e(S^*)}+\frac{q}{2}\sum_{e\in E}{a_{e,0}\cdot n_e(S)}\\
&\le& \frac{q}{2}\sum_{e\in E}{\bigg(\frac{1}{2}a_{e,1}\cdot n_e(S)^2+\frac{1}{2}a_{e,1}\cdot n_e(S)+a_{e,0}\cdot n_e(S)\bigg)}\\
& & + q\sum_{e\in E}{\bigg(\frac{1}{2}a_{e,1}\cdot n_e(S^*)^2+\frac{1}{2}a_{e,1}\cdot n_e(S^*)+a_{e,0}\cdot n_e(S^*)\bigg)}\\
&=& \frac{q}{2}\Phi(S)+q\Phi(S^*),
\end{eqnarray*}
and, equivalently, $\Phi(S)\leq \frac{2q}{2-q}\Phi(S^*)$.
\qed

\paragraph{Proof of Lemma \ref{lem:ratio-d}.}
Observe that $\Phi(S)\leq \sum_{u\in N}{c_u(S)}$ (see Claim \ref{claim:potential-inequalities}). We will also show that $\Phi(S^*)\geq \frac{1}{d+1}\sum_{u\in N}{c_u(S^*)}$. The desired bound then follows by the fact that the price of anarchy of $2$-approximate equilibria is at most $d^{O(d)}$. Notice that the price of anarchy is at least $\sum_{u\in N}{c_u(S)}/\sum_{u\in N}{c_u(S^*)}$.

We will use the property $\int_0^y{f(x)} dx\leq \sum_{j=1}^y{f(j)},$ that holds for every non-decreasing function $f:[0,y]\rightarrow R$ and
integer $y\ge 1$. We prove the desired inequality as follows.


\begin{eqnarray*}
\Phi(S^*) & = & \sum_{e\in E}{\sum_{j=1}^{n_e(S^*)}{f_e(j)}}\\
&\geq &\sum_{e\in E}{\int_0^{n_e(S^*)}{f_e(x)dx}}\\
&=& \sum_{e\in E}{\int_0^{n_e(S^*)}{\sum_{k=0}^d{a_{e,k}\cdot x^k}dx}}\\
&=& \sum_{e\in E}{\sum_{k=0}^d{\frac{a_{e,k}}{k+1}n_e(S^*)^{k+1}}}\\
&\geq & \frac{1}{d+1}\sum_{e\in E}{\sum_{k=0}^d{a_{e,k}\cdot n_e(S^*)^{k+1}}}\\
&=&\frac{1}{d+1}\sum_{e\in E}n_e(S^*)f_e(S^*)\\
& = & \frac{1}{d+1}\sum_{u\in N}{c_u(S^*)}.
\end{eqnarray*}
\qed

\paragraph{Proof of Lemma \ref{lem:cost-potential}.}
We denote by $s_u$ the strategy of player $u$ at state $S^i$. We rank the players
that use resource $e$ in $S^i$ according to the timing of their last moves (using consecutive integers $1, 2, ...$). We
denote by $\mbox{rank}_e(u)$ the number of players in $R_i$ with the
smaller ranking than $u$ on resource $e$. Then, we get
$c(u)\ge\sum_{e\in s_u}{f^{R_i}_e(\mbox{rank}_e(u))}$, since any resource $e$ in $s_u$
is occupied by at least $\mbox{rank}_e(u)$ players from $R_i$ at state $S^i$: $u$ and the players with
ranks $1, 2, ..., \mbox{rank}_e(u)-1$ that made their last move before $u$. Hence, by the definition of the potential function (expressed using the modified latency functions for the subgame among the players of $R_{i}$), we have
\begin{eqnarray*}
\Phi_{R_i}(S^i) &= & \sum_{e\in E}{\sum_{j=1}^{n^{R_i}_e(S^i)}{f^{R_i}_e(j)}}\\
&=& \sum_{e\in E}~{\sum_{u\in R_i: e\in s_u}{f^{R_i}_e(\mbox{rank}_e(u))}}\\
&=& \sum_{u\in R_i}{\sum_{e\in s_u}{f^{R_i}_e(\mbox{rank}_e(u))}}\\
&\leq & \sum_{u\in R_i}{c(u)},
\end{eqnarray*}
and the lemma follows.
\qed 

%
%
%
%
%
%
%

\newcommand{\NN}{\ensuremath{\mathbb{N}}}
\newcommand{\ZZ}{\ensuremath{\mathbb{Z}}}
\newcommand{\XX}{\ensuremath{\overline{X}}}
\renewcommand{\SS}{\ensuremath{{\cal S}}}
\newcommand{\ZZZ}{\ensuremath{\mathcal{Z}}}
\newcommand{\inn}{\ensuremath{\in \{1,\ldots,n\}}}
\newcommand{\inm}{\ensuremath{\in \{1,\ldots,m\}}}
\newcommand{\inb}{\ensuremath{\in \{0,1\}}}
\newcommand{\ijb}{\ensuremath{i \inn\text{, }j \inm\text{, }b \inb}}

\newcommand{\Main}{\ensuremath{\mathrm{Main}} }
\newcommand{\Block}{\ensuremath{\mathrm{Block}} }
\newcommand{\ccheck}{\ensuremath{\mathrm{Check}} }
\newcommand{\change}{\ensuremath{\mathrm{Change}} }
\newcommand{\CCheck}{\ensuremath{\mathrm{Check}} }
\newcommand{\Change}{\ensuremath{\mathrm{Change}} }
\newcommand{\TriggerX}{\ensuremath{\mathrm{Trigger}X} }
\newcommand{\TriggerY}{\ensuremath{\mathrm{Trigger}Y} }
\newcommand{\BlockS}{\ensuremath{\mathrm{Block}S} }
\newcommand{\BlockX}{\ensuremath{\mathrm{Block}X} }
\newcommand{\Bit}{\ensuremath{\mathrm{Bit}} }
\newcommand{\TriggerDoneY}{\ensuremath{\mathrm{TriggerDone}Y} }
\newcommand{\Lock}{\ensuremath{\mathrm{Lock}} }
\newcommand{\LockGate}{\ensuremath{\mathrm{LockGate}} }
\newcommand{\LockG}{\ensuremath{\mathrm{Lock}G} }
\newcommand{\TriggerController}{\ensuremath{\mathrm{TriggerController}} }
\newcommand{\reset}{\ensuremath{\mathrm{reset}} }
\newcommand{\Reset}{\ensuremath{\mathrm{Reset}} }
\newcommand{\One}{\ensuremath{\mathrm{One}} }
\newcommand{\LockS}{\ensuremath{\mathrm{Lock}S} }
\newcommand{\TriggerLockG}{\ensuremath{\mathrm{TriggerLock}G} }
\newcommand{\ResetDoneY}{\ensuremath{\mathrm{ResetDone}Y} }
\newcommand{\TriggerUnlockG}{\ensuremath{\mathrm{TriggerUnlock}G} }


\section{Proof of Theorem \ref{theorem:pls}}
\label{sec:PLS}





We prove the theorem by reworking the reduction in~\cite{SkopalikV08}. From now on, we refer to it as the original or old construction or proof. In the following, we outline our  modifications in the original construction (in Section \ref{sec:PLS-B1}) and prove the correctness of the new one (in Section \ref{sec:PLS-B2}). Finally, we give a detailed description of the new construction in Section \ref{sec:PLS-B3}.

Recall that the original proof is a reduction from the {\sf PLS}-complete problem {\sc Flip} which is the following:
\begin{dfn}\label{def:FLIP}
An instance of the problem {\sc Flip} consists of a boolean circuit
$C$ with $n$ inputs and $m$ outputs.
A feasible solution is a bit vector  $x_1,\ldots,x_n$ and the objective value is
defined as $c(x) =\sum_{i=1}^k
y_i 2^{i-1}$ where $y$ is the output produced by $C$ with input $x$.  The neighborhood $N(x)$ of solution $x$
is the set of bit vectors $x'$ of length $n$ that differs from $x$ in one
bit. The objective is to find a local minimum.
\end{dfn}
The proof describes a transformation of $C$ into a congestion game $G(C)$ which has the property that every pure Nash equilibrium of $G(C)$ corresponds to a local optimum of $C$. Furthermore, it is ensured that every equilibrium is also an $\alpha$-approximate equilibrium for $\alpha \ge \max\{\rho,2\}$ by ensuring that every strategy change of a player decreases her latency by a factor of at least $\alpha$.

Our new construction has the additional property that  every resource is part of at most two players' strategies.
Therefore, it suffices to specify only the latency values for one and two players for each resource.
For the sake of readability, we depict latency functions by the two values $a/b$, which correspond to $f_e(1)=a$ and $f_e(2) = b$. This can obviously be turned into a linear function by setting $f_e(x)= (b - a) x + 2a - b$.

To simplify the presentation, there are many latency functions with $f_e(1)=0$. However, we can set $f_e(1) = 1$ and scale
all other latency values by a factor of $|E|\alpha$. This modification does not change the players' preferences and, by choosing $\alpha \ge 2\rho$, 
the theorem still holds. Thus, the latency functions can be described as polynomials with positive coefficients. However, their degree has to be polynomial in the number of players. 

A close look at the original reduction reveals that most of the resources are used by at most two players. The only resources for which this is not the case are the resources Bit1$_k$ of the subgames $G(S)$ and all Lock resources. Unfortunately, those resources are part of the lockable circuits, the most important feature of this reduction. We will replace these resources and add new strategies and new players to the game. See Figures~\ref{figure:tablecontroller} to~\ref{figure:tableY} for a complete description of the players' strategies, the resources, and latency functions.

\begin{figure}[t]
\centerline{\fbox{\begin{minipage}{\textwidth} 
\begin{enumerate}
\item The controller switches from $\LockS_0$ to $\LockS^j_{i,b}$.
\item All players $\LockG_k$ of all circuits except $S^j_{i,b}$ move (in increasing order) to Unlock.\\
      By moving from her One-strategy to her strategy $\change^j_{i,b}$, player
$Y_j$
\begin{itemize}
\item triggers player $X_i$ to switch to One or Zero for $b=1$ or  $b=0$, respectively, and
\item triggers the players $Y_1,\ldots,Y_{j-1}$ to switch to One.
\end{itemize}
\item After all triggered actions are done and all gates with input $y_j$ are unlocked, player $Y_j$ can change to her
strategy $\ccheck^j_{i,b}$, and thereby it triggers the Lock players to lock circuit $S_0$ and the controller to move to $\LockS_0$
\item All Lock player of circuit $S_0$ switch to one of their Lock strategies in decreasing order.
\item The controller moves back to $\LockS_0$.
\item Player $Y_j$ changes to her Zero-strategy and all Lock players of all circuits  $S^j_{i,b}$ move to their Lock strategies in decreasing order.
\end{enumerate}
\end{minipage}}}
\caption{\label{figure:superstep}
Description of a superstep beginning and ending in a base state.}
\end{figure}

\subsection{The construction of $G(C)$}\label{sec:PLS-B1}

In the original construction, a gate player has two strategies which correspond to the two values of the output of that gate. Her best response is determined by the players that correspond to the inputs of this gate. This requires latency functions for the bit resources that are not linear, i.e., latency of $0$ for two players and latency of $\alpha^{2k}$ for three players. We can avoid this, by the following changes:

Instead of one strategy One, a gate player $G_i$ now has two strategies OneA and OneB. If input $a$ of her gate is $1$, her best response is not to choose OneA. If input $b$ of her gate is $1$, her best response is not to choose OneB. As before, choosing Zero is only a best response if both inputs are $1$. For gates that have $g_i$ as input, there is no difference between $G_i$ choosing OneA or OneB and the semantics of NAND is preserved by this construction.
The Lock resources that also required nonlinear latency functions in the original construction are replaced by individual copies for each gate player.


The Lock resources play a central role in the reduction. They ensure that a state is either expensive, i,.e., some player has latency of at least $M$, or the state is part of a sequence of states  that simulates an improvement step in the {\sc Flip}-instance. We call such a sequence a {\em superstep} (see Figure~\ref{figure:superstep}).

For each player that uses a Lock resource in the old game, we introduce copies in the new game that are only used by this player and a new Lock player.
A Lock player allocates the new Lock resources of a gate in such a way that she acts as a proxy between them.
The main difference compared to the old construction is that the Controller cannot lock the circuit herself by allocating the Lock resources. Instead, the new $\LockG_i$ players have to change to one of their Lock strategies. By doing so, they set the corresponding Lock resources free for the Controller, which in turn allows the Controller to move to the strategy that locks this circuit.
Additionally, we ensure that there is no gridlock. That is, the newly added Lock players switch their strategies whenever needed. For this purpose, there are two kinds of resources that are part of the Lock players strategies. These resources are $\TriggerLockG_i$ and $\TriggerUnlockG_i$. We add these resources to strategies of the controller and the $Y$ players to ensure that a sequence of improvement steps corresponding to a superstep is possible. Furthermore, we allow a Lock player to lock a gate only if its input gates are already locked. This guarantees that a gate can always change to its correct value.


In the Controller's strategies, the original Lock resources are replaced by their copies as described above. Furthermore, the strategy $\LockS_0$ contains the resources $\TriggerLockG_k$ for all gates $g_k$ of all circuits. This ensures that eventually a circuit $S^j_{i,b}$ is locked if changing bit $x_i$ to $b$ yields enough improvement to switch $y_j$ to $0$.
In contrast to the old construction, this requires  the Controller to have higher latency since the  $\TriggerLockG_k$ resources may cause latency of $\alpha^2$ for every gate in the circuits $S^j_{i,b}$. We adjust the latency functions of the resources $\LockS_0$ and $\TriggerController$ by a factor of $\beta$ to account for this.

Finally, the Controller now has two reset strategies instead of one in the old construction. This is necessary in order to ensure that the $Y$ players are actually reset to their One strategies before locking a circuit.

In the strategies of the $X$ and $Y$ players, the Bit and Lock resources are replaced like in the gate players' strategies as described above.
In addition to that, $Y$ players' strategies (besides their One strategies) contain the additional resource $\ResetDoneY_j$ that they share with the new Reset2 strategy of the Controller. Finally, the Check strategies contain the $\TriggerLockG_k$ resources of the Lock player of circuit $S_0$.

\subsection{Proof of correctness}\label{sec:PLS-B2}
We refrain from repeating the correctness proof of the original construction. Instead, we merely outline its arguments and point out the parts that have changed due to our modifications. The proof divides the set of states into several disjoint sets, $Z_0,\ldots,Z_6$, and shows that all equilibria are contained in one of them, $Z_0$. The states in $Z_0$ are called base states. These are all inexpensive states in which every $Y$ player plays Zero or One and the Controller plays $\LockS_0$. It then suffices to show that a base state is not an approximate equilibrium (i.e., there is an improving move) if the bit vector $x$ represented by the input player is not a local optimum of $C$.

\begin{lemma}[\cite{SkopalikV08}]
None of the states in $Z_1,\ldots,Z_4$ is an equilibrium.
\end{lemma}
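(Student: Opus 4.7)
The plan is to proceed by case analysis on the four sets $Z_1, \ldots, Z_4$, exhibiting for every state in each set at least one player whose current strategy is not an $\alpha$-approximate best response. Since this is a reworking of Skopalik--Vöcking's reduction, the high-level argument mirrors theirs: each $Z_i$ corresponds to a transitional phase of a superstep (as depicted in Figure~\ref{figure:superstep}), and in each phase there is a scheduled next move that strictly reduces some player's latency by more than a factor of $\alpha$. The first step is to spell out, for the new construction, which intermediate states belong to each $Z_i$: roughly, $Z_1$ contains states where the Controller has switched to some $\LockS^j_{i,b}$ but some $\LockG_k$ of an affected circuit still plays her Lock strategy; $Z_2$ the states mid-broadcast after $Y_j$ moves to $\change^j_{i,b}$ (some $X_i$, some $Y_{j'}$ with $j'<j$, or some gate $G_k$ not yet reflecting the propagated value); $Z_3$ the states after $Y_j$ moves to $\ccheck^j_{i,b}$ but before all $\LockG_k$ of circuit $S_0$ have re-locked and the Controller has returned to $\LockS_0$; and $Z_4$ the Reset phase states as the Controller traverses Reset1 and Reset2 and the remaining Lock players of the $S^j_{i,b}$ circuits re-lock. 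One must verify that this partition is exhaustive over the non-base, inexpensive states (the expensive states are absorbed by $Z_5, Z_6$).

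For each class I would then identify the forced improvement by reading off the two latency values $f_e(1)$ and $f_e(2)$ from Figures~\ref{figure:tablecontroller}--\ref{figure:tableY} and checking that the resulting cost ratio exceeds $\alpha$. In $Z_1$, the next $\LockG_k$ in the unlocking order is exposed to the $\TriggerUnlockG_k$ penalty (because the Controller no longer covers it via $\LockS_0$) and strictly gains by switching to Unlock. In $Z_2$, the propagation order of the triggered update yields a player who is currently mis-aligned with the triggered configuration and profits by flipping; the OneA/OneB split for gate players is handled by observing that the mis-alignment manifests on at least one of the two input-resources, which is enough. In $Z_3$, the $\LockG_k$ of $S_0$ move to their Lock strategies in decreasing order, each collecting the gain from the $\TriggerLockG_k$ resource that $\ccheck^j_{i,b}$ has now made available, and subsequently the Controller gains by returning to $\LockS_0$. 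In $Z_4$, the Controller's moves Reset1~$\to$~Reset2~$\to~\LockS_0$ are profitable once the corresponding $\ResetDoneY_j$ resources are covered, and the remaining $\LockG_k$ re-lock by the same argument as in $Z_3$.

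The main obstacle will be bookkeeping: the newly introduced $\LockG_k$ players and the OneA/OneB splitting create many additional intermediate configurations that the original proof never had to treat, and one must rule out a potential deadlock in which several $\LockG_k$ are mid-transition but none has a strict improvement. The construction's rule that a $\LockG_k$ player may lock her gate only when all input gates are already locked (and a symmetric condition for unlocking) imposes a well-founded ordering along gate depth within each circuit, and induction along this ordering rules out such deadlocks. A second delicate point is the rescaling by $\beta$ of the $\LockS_0$ and $\TriggerController$ latencies: one must check that the Controller's cost, now inflated by the $\TriggerLockG_k$ loads contributed by all new Lock players, still drops by more than factor $\alpha$ on each scheduled Controller move. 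Once these checks are carried out, the case analysis for $Z_1, \ldots, Z_4$ reduces to direct latency arithmetic of the same flavour as the original proof, and the lemma follows.
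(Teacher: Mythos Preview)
Your overall strategy (case analysis showing some player has an $\alpha$-improving move) is the right shape, but the proposal misidentifies what the sets $Z_1,\ldots,Z_4$ actually are, and this leads to a concrete gap.

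The sets $Z_0,\ldots,Z_6$ are defined in \cite{SkopalikV08}, and the present paper deliberately does not redefine them; its ``proof'' of this lemma is a \emph{delta argument}: it assumes the original case analysis and only checks that the newly introduced resources and players do not block the improving move that the old proof exhibited. Your proposal instead tries to reconstruct the partition from the superstep narrative, and the reconstruction goes wrong in at least one place: you place the Reset phase in $Z_4$, but the paper handles the Controller's Reset1/Reset2 states in the proof of the \emph{next} lemma (for $Z_5$ and $Z_6$). For $Z_4$ the paper simply says ``the arguments of the original proof suffice,'' i.e., none of the new players or resources are involved at all. So your $Z_4$ analysis is proving the wrong thing.

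Your treatment of $Z_1$ and $Z_2$ also diverges from the paper's argument in a way that matters. You look for an improving move by a $\LockG_k$ player (``the next $\LockG_k$ in the unlocking order is exposed to the $\TriggerUnlockG_k$ penalty''). The paper instead argues contrapositively about what an equilibrium in $Z_1$ or $Z_2$ would have to look like: in any such equilibrium the Lock players of $S_0$ and of every circuit $S^{j'}_{i',b'}$ with $(j',i',b')\neq(j,i,b)$ must already be on Unlock (because their $\TriggerLockG_k$ resources are not allocated by anyone), and \emph{therefore} the Lock resources of $Y_j$, $Y_{j'}$ ($j'<j$), and $X_i$ are free --- which is exactly the hypothesis under which the original \cite{SkopalikV08} argument produces an improving move for one of those players. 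The improving player is thus a $Y$- or $X$-player, not a Lock player. Your ordering/deadlock discussion is not needed here; what is needed is to verify that the new Lock players do not obstruct the old argument, which is a shorter and different check. (Incidentally, your parenthetical about $\TriggerUnlockG_k$ being ``covered via $\LockS_0$'' is off: $\LockS_0$ allocates the $\TriggerLockG_k$ resources, not the $\TriggerUnlockG_k$ ones; the latter appear only in Reset1.)

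For $Z_3$ your identification is essentially right (player $Y_j$ is on $\ccheck^j_{i,b}$), and the paper's point matches yours: $Y_j$ now allocates the $\TriggerLockG_k$ resources of $S_0$, so in equilibrium all Lock players of $S_0$ are on a Lock strategy, which frees the Controller to move to $\LockS_0$.

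In short: the paper's proof is not self-contained and is not meant to be; it piggybacks on \cite{SkopalikV08} and only verifies that the modifications preserve the improving moves found there. Your attempt to rebuild the whole case analysis without access to the original definitions of $Z_1,\ldots,Z_4$ leads you to misplace $Z_4$ and to argue about the wrong players in $Z_1,Z_2$.
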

The proof of the lemma considers the sets one after the other and shows that in each of them there is a player that has an improving move.
In our modified construction, we need to show that none of the resources that we added or modified prevents such a move.

In a state in $Z_1$ or $Z_2$,  Lock players of every circuit $S^{j'}_{i',b'}$ with $(j',i',b') \ne (j,i,b)$ and of circuit $S_0$ play Unlock in equilibrium. This is due to the fact, that their $\TriggerLockG_k$ resources are not allocated by another player. This
implies that none of the Lock resources of the players $Y_j$, $Y_{j'}$ (with $j' < j$), and $X_i$ is allocated by a Lock player. Therefore,
$Z_1$ and $Z_2$ do not contain equilibria of the modified game.
For a state in $Z_3$, observe that player $Y_j$ allocates the resources $\TriggerLockG_k$ for every gate $g_k$ of circuit $S_0$. In equilibrium, all Lock players of this circuit are playing one of their Lock strategies, which allows for the Controller to change to $\LockS_0$.
For a state in $Z_4$ the arguments of the original proof suffice.

\begin{lemma}[\cite{SkopalikV08}]
None of the states in $Z_5$ or $Z_6$ is an equilibrium.
\end{lemma}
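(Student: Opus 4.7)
The plan is to adapt the original argument of \cite{SkopalikV08} for the sets $Z_5$ and $Z_6$ to our modified construction, paralleling the approach used in the preceding lemma for $Z_1,\ldots,Z_4$. For every state in $Z_5$ or $Z_6$ the original proof exhibits a player with a strictly improving deviation; the task here is to verify that our modifications do not invalidate that deviation. The modifications that have to be checked are the replacement of the multi-player Lock and Bit resources with two-player copies, the introduction of the $\LockG_k$ players with their $\TriggerLockG_k$ and $\TriggerUnlockG_k$ resources, the additional Reset2 strategy of the Controller, and the $\ResetDoneY_j$ resources added to the non-One strategies of the $Y_j$ players.

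For each state $S\in Z_5$, I would take the improving player $p$ identified in the original proof. If $p$ is a player other than the Controller performing a lock action, then $p$'s set of resources at $S$ and at her deviation target is essentially unchanged: the old shared Lock resources are simply replaced by per-gate copies whose allocation state faithfully reflects the original. The small additional latency contributed by the $\TriggerLockG_k$, $\TriggerUnlockG_k$ and $\ResetDoneY_j$ resources is compensated by the extra scaling factor $\beta$ applied to the Controller's latencies and by the uniform multiplicative rescaling of all latencies, so the improvement factor of at least $\alpha$ is preserved. If the original improving move was a lock action by the Controller, then in the modified game the same effect is now produced by a $\LockG_k$ player currently playing Unlock whose $\TriggerLockG_k$ is allocated by the Controller's $\LockS_0$ strategy or by a $\ccheck^j_{i,b}$ strategy of some $Y_j$; the ordering rule that a gate may be locked only after its input gates have been locked is respected because one can always pick a $\LockG_k$ player at the bottom of the gate dependency order.

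For states in $Z_6$, I would argue similarly. These are configurations near the end of the superstep, when $Y_j$ has moved back to her Zero strategy and the Lock players of $S^j_{i,b}$ are expected to re-lock. The new Reset2 strategy of the Controller and the $\ResetDoneY_j$ resources ensure that $Y_j$ can complete the transition back to Zero in an order that leaves the subsequent locking moves still improving; the corresponding $\LockG_k$ moves follow the same pattern as in $Z_5$. In both sets, the key technical point is that the multiplicative $\alpha$-gap between the latency before and after the identified deviation, which is the source of the approximation factor in the reduction, is preserved by all the rescalings performed to turn the construction into one with positive-coefficient linear latency functions.

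The main obstacle is the case analysis needed to verify that no partial configuration of the $\LockG_k$ players creates a gridlock in $Z_5$ or $Z_6$. Concretely, one must show that in every such state some $\LockG_k$ player is eligible to move, in the sense that its input gates are already locked and its $\TriggerLockG_k$ resource is allocated, so that the locking phase of the superstep can always proceed; and symmetrically, in the unlocking substates, some $\LockG_k$ player must be eligible to deviate to Unlock. This reduces to a case analysis over the entries of the tables defining the $\LockG_k$ players' strategies. Once this verification is complete, no state in $Z_5$ or $Z_6$ can be an $\alpha$-approximate equilibrium, which together with the preceding lemma implies that the set of $\alpha$-approximate equilibria of $G(C)$ is contained in $Z_0$, and hence corresponds to local optima of $C$, completing the reduction.
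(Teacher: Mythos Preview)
Your proposal rests on a misidentification of the sets $Z_5$ and $Z_6$. You treat $Z_6$ as ``configurations near the end of the superstep, when $Y_j$ has moved back to her Zero strategy and the Lock players of $S^j_{i,b}$ are expected to re-lock,'' apparently reading it as step~6 of the superstep description. That is not what $Z_6$ is. In the original decomposition (and in the paper's adaptation), $Z_5$ is the set of \emph{expensive} states, namely those in which some player has latency at least $M^2$, and $Z_6$ is the set of states in which the Controller plays one of her \emph{Reset} strategies. The argument the paper gives is accordingly quite different from what you sketch: for $Z_5$ one simply observes that any player with latency $\geq M^2$ has an improving move; for $Z_6$ one splits on whether the Controller plays Reset1 or Reset2. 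On Reset1, the $\TriggerUnlockG_k$ and $\TriggerY_j$ resources force all Lock players to Unlock and all $Y$ players to \emph{One} (not Zero), after which the Controller can profitably switch to Reset2. On Reset2 with latency below $M^2$, the $\ResetDoneY_j$ resources certify that every $Y_j$ already plays One, so in any equilibrium the Lock players of $S_0$ must be on Lock strategies, and then the Controller can profitably move to $\LockS_0$.

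Because your case analysis is organized around the wrong description of the sets, the verifications you propose (eligibility of $\LockG_k$ players to lock or unlock during phases of the superstep, re-locking of $S^j_{i,b}$ after $Y_j$ returns to Zero) are beside the point for this lemma; those concerns belong to the analysis of $Z_1$--$Z_4$ and to showing that base states that are not local optima admit an improving move. To repair the argument, you should first state precisely what $Z_5$ and $Z_6$ are, and then check the two Reset cases as above, verifying in particular that the new resources $\TriggerUnlockG_k$, $\TriggerLockG_k$, and $\ResetDoneY_j$ produce exactly the incentives needed for the Controller's transitions Reset1 $\to$ Reset2 $\to$ $\LockS_0$.
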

$Z_5$ contains the states in which a player has latency of $M^2$ or more. In this case there is always a sequence of improving moves that lead to a state in $Z_6$.
For a state in $Z_6$, we distinguish between two cases depending on the strategy played by the Controller. If she is on Reset1, all Lock players have an incentive to play their Unlock strategies and all $Y$ players to play their One strategies. However, this allows the Controller to change to Reset2 to decrease her latency.
If the Controller plays Reset2 and has latency of less than $M^2$, all $Y$ players play their One strategy.
Therefore, in an equilibrium, all Lock player of circuit $C_0$ play one of their Lock strategies. This, however allows the Controller to change to $\LockS_0$.

\begin{lemma}[\cite{SkopalikV08}]
Suppose $s$ is a base state in equilibrium. Then, the bit vector $x$
represented by the input players is a local optimum of $C$.
\end{lemma}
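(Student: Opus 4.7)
The plan is to argue by contraposition: assume that the bit vector $x$ encoded by the $X$-players in the base state $s$ is \emph{not} a local optimum of $C$, and then exhibit an improving move of ratio at least $\alpha\ge\rho$, contradicting the $\rho$-approximate equilibrium hypothesis. Non-optimality of $x$ supplies a coordinate $i$ and a bit $b$ such that $c(x_{-i},b)<c(x)$. Writing $y=C(x)$ and $y'=C(x_{-i},b)$, let $j$ be the highest-order index where $y$ and $y'$ disagree; since $c$ strictly drops, we must have $y_j=1$, $y'_j=0$ and $y_k=y'_k$ for every $k>j$. I would then single out the triple $(j,i,b)$ as the superstep identifier used by the reduction.

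The construction of Figure~\ref{figure:superstep} is specifically engineered so that, starting from a base state, this triple is detected already in step~1: the Controller deviates from $\LockS_0$ to $\LockS^j_{i,b}$. My plan is to verify that in $s$ this is an improving move of ratio $\ge\alpha$. Two features of a base state power the computation: every $\LockG_k$ player sits in a Lock strategy (so the Controller pays the two-player latency on every $\TriggerLockG_k$ resource attached to $\LockS_0$, amplified by the factor $\beta$ introduced in Section~\ref{sec:PLS-B1}), and every gate player correctly realizes $y=C(x)$ in $s$ (so in particular $Y_j$ plays her $\One$-strategy). Because the $\TriggerLockG_k$ resources dropped upon moving to $\LockS^j_{i,b}$ are precisely those of the circuit $S^j_{i,b}$, and because $y'_j=0$ together with $y_k=y'_k$ for $k>j$ ensures that no higher-order circuit produces a countervailing penalty on $\LockS^j_{i,b}$, the Controller's cost drops by a factor of at least $\alpha\ge\rho$, yielding the desired contradiction. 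The factor $\beta$ is chosen precisely to absorb the additional latency contribution of the new $\TriggerLockG_k$ proxies relative to the original Skopalik--V\"ocking construction.

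The main obstacle is the bookkeeping needed to ensure that none of the modifications of Section~\ref{sec:PLS-B1}---the OneA/OneB split of gate players that replaces the nonlinear Bit-resources, the per-player copies of Lock resources mediated by the new $\LockG_k$ proxies, the split of Reset into Reset1/Reset2 equipped with the new $\ResetDoneY_j$ resources, and the global rescaling by $\beta$---inadvertently injects enough latency on some other resource of $\LockS^j_{i,b}$ to spoil the ratio. The two preceding lemmas already certify that in $Z_0$ every relevant Lock player is in the ``correct'' locked/unlocked configuration and that the gates realize $C(x)$, so qualitatively the trigger mechanism of the original reduction transports verbatim; the quantitative step reduces to a routine rescaling of the Skopalik--V\"ocking calculation by $\beta$, after which one merely checks that no resource introduced by our modifications blocks the Controller's deviation. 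Once the Controller's improvement is established, the resulting violation of the $\rho$-approximate equilibrium at $s$ forces $x$ to be a local optimum, completing the proof.
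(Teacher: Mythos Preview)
Your overall strategy---exhibit a Controller deviation from $\LockS_0$ to some $\LockS^j_{i,b}$---is exactly the paper's, but the mechanism you invoke is inverted, and as written the argument does not go through.

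First, the $\TriggerLockG_k$ bookkeeping is backwards. The resource $\TriggerLockG_k$ lies in the \emph{Unlock} strategy of $\LockG_k$, not in any Lock strategy (see Figure~\ref{figure:tableLockG}). Hence if, as you assert, every $\LockG_k$ sits in a Lock strategy, the Controller on $\LockS_0$ is \emph{alone} on every $\TriggerLockG_k$ and pays $0$ there, not $\alpha^2$. The Controller's cost on $\LockS_0$ is dominated by the flat $\beta$ on the resource $\Lock_0$; the $\beta$-rescaling of Section~\ref{sec:PLS-B1} was introduced precisely so that this $\beta$ dwarfs any $\TriggerLockG_k$ contribution (at most $K\alpha^2\ll\beta$), not to amplify it.

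Second, you do not address the actual obstacle on the target strategy. The strategy $\LockS^j_{i,b}$ contains $\LockGate_k(\text{Controller})$ for every gate $g_k$ of $S^j_{i,b}$, each with latency $0/M^2$, and these resources are shared with the Unlock strategy of the corresponding $\LockG_k$. If even one $\LockG_k$ of $S^j_{i,b}$ plays Unlock, the Controller would pay $M^2$ after deviating and the move is not improving. This is the whole point of the paper's one-line argument: one must first establish that \emph{in equilibrium} the Lock players of a circuit $S^j_{i,b}$ with output~$1$ all play Lock strategies (the output-$1$ condition guarantees a consistent locking pattern exists and the $\TriggerLockG_k$ pressure from $\LockS_0$ then makes Unlock non-best-response). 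Only then is the Controller's move from cost $\ge\beta$ down to cost $\approx 1$ on $\TriggerController$, i.e.\ an $\alpha$-improvement. Your proposal asserts the Lock-player configuration as a side remark attributed to ``the two preceding lemmas,'' but those lemmas concern $Z_1,\ldots,Z_6$, not $Z_0$; the claim for base states is precisely what needs to be argued here. Fixing these two points (reversing the role of $\TriggerLockG_k$, and explaining why the Lock players of the chosen circuit are locked in equilibrium) would align your proof with the paper's.
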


In any base state in which $x$ is not a local minimum of $C$, all Lock players of a circuit $S^j_{i,b}$ that has output $1$ play one of their Lock strategies in equilibrium. Therefore the controller can change to $\LockS^j_{i,b}$. \qed

\subsection{Detailed description of $G(C)$}\label{sec:PLS-B3}
\label{appendix:definition}
 Recall that
$\beta = \alpha^{2K+1}$ with $K$ being the total number of gates over all
circuits and $\gamma = 2 \alpha \beta$. That is, $\alpha \ll \beta \ll \gamma
\ll M$.


\begin{figure}[h]
\begin{center}
\begin{tabular}[ht]{|l|l|l|}
  \hline
  Strategies& Resources & Latencies \\
  \hline
\hline

$\LockS_0$  &  $\Lock_0$          & $\beta$\\
            &  $\BlockS_0$        & $0/M^2$\\
     \cline{2-3}
     &\multicolumn{2}{l|}{{\it For all gates $g_k$ of all circuits $S^j_{i,b}$:}} \\
        \cline{2-3}
            &  \quad $\TriggerLockG_k$ & $0 / \alpha^2$\\
     \cline{2-3}
     &\multicolumn{2}{l|}{{\it For all gates $g_k$ of $S_0$:}} \\
        \cline{2-3}
          &   \quad $ \LockGate_k(\text{Controller})$ of $G(S_0)$  & $0/M^2$  \\

           \hline
$\LockS^j_{i,b}$ & $\TriggerController$                               & $1/\beta^2$\\
           &$\BlockS^j_{i,b}$                                            &$0/M^2$\\
           & $\mathrm{Block}Y_j$                                         & $0/M^2$ \\
       \cline{2-3}
     &\multicolumn{2}{l|}{{\it For all gates $g_k$ of $S^j_{i,b}$: }} \\
        \cline{2-3}
           &   \quad $ \LockGate_k(\text{Controller})$ of $G(S^j_{i,b})$  & $0/M^2$  \\

\hline
$\Reset1$&$\Reset1$  &$2M$\\
       \cline{2-3}
&\multicolumn{2}{l|}{{\it  For all $j \inm$: }} \\
       \cline{2-3}
&\quad  $\TriggerY_j$& $0 / 5\alpha^5 \gamma^j$\\
       \cline{2-3}
         &\multicolumn{2}{l|}{{\it For all gates $g_k$ of all circuits:}} \\
        \cline{2-3}
            &  \quad $\TriggerUnlockG_k$ & $ \alpha / \alpha^3$\\

  \hline
 $\Reset2$&$\Reset2$  &$M$\\
       \cline{2-3}
&\multicolumn{2}{l|}{{\it  For all $j \inm$: }} \\
       \cline{2-3}
&\quad  $\ResetDoneY_j$& $0 / M^5$\\
     \cline{2-3}
     &\multicolumn{2}{l|}{{\it For all gates $g_k$ of circuit $S_0$:}} \\
        \cline{2-3}
            &  \quad $\TriggerLockG_k$ & $0 / \alpha^2$\\
     \cline{2-3}

  \hline

\end{tabular}
\caption{\label{figure:tablecontroller}
Definition of the strategies of the Controller}
\end{center}
\end{figure}

\begin{figure}[h]
\begin{center}
\begin{tabular}[ht]{|l|l|l|}
  \hline
  Strategies & Resources & Latencies \\
\hline
\hline

  OneA & $\Bit1a_k$   & $0/\alpha^{2k}$   \\
      & $\Lock1a_i(G_i) $& $0/M^3$\\
      \cline{2-3}
     &\multicolumn{2}{l|}{{\it If gate $g_k$ has $g_i$ as input $a$:}}\\
        \cline{2-3}
  &  \quad $\Bit1a_k$   & $0/\alpha^{2k}$\\
  &  \quad $\Lock1a_k(G_i)$   &$0/M^3$ \\
      \cline{2-3}
     &\multicolumn{2}{l|}{{\it If gate $g_k$ has $g_i$ as input $b$:}}\\
        \cline{2-3}
  &  \quad $\Bit1b_k$   & $0/\alpha^{2k}$\\
  &  \quad $\Lock1b_k(G_i)$   &$0/M^3$ \\

  \hline

  OneB & $\Bit1b_k$   & $0/\alpha^{2k}$   \\
      & $\Lock1b_i(G_i) $& $0/M^3$\\
      \cline{2-3}
     &\multicolumn{2}{l|}{{\it If gate $g_k$ has $g_i$ as input $a$:}}\\
        \cline{2-3}
  &  \quad $\Bit1a_k$   & $0/\alpha^{2k}$\\
  &  \quad $\Lock1a_k(G_i)$   &$0/M^3$ \\
      \cline{2-3}
     &\multicolumn{2}{l|}{{\it If gate $g_k$ has $g_i$ as input $b$:}}\\
        \cline{2-3}
  &  \quad $\Bit1b_k$   & $0/\alpha^{2k}$\\
  &  \quad $\Lock1b_k(G_i)$   &$0/M^3$ \\
 \hline

  Zero & $\Bit0a_i$& $0/\alpha^{2k} $\\
       &    $\Bit0b_i$& $0/\alpha^{2k} $\\
        & $\Lock0a_i(G_i) $& $0/M^3$\\
        & $\Lock0b_i(G_i) $& $0/M^3$\\
        \cline{2-3}
  &\multicolumn{2}{l|}{ {\it If gate $g_k$ has $g_i$ as input $a$:}} \\
     \cline{2-3}
  & \quad $\Bit0a_k$ of $G(S)$ & $0 / \alpha^{2k}$\\
  & \quad $\Lock0a_k(G_i)$ of $G(S)$ &$0/M^3$ \\
   \cline{2-3}
    &\multicolumn{2}{l|}{{\it If gate $g_k$  has $g_i$ as input $b$:}}\\
   \cline{2-3}
  & \quad $\Bit0b_k$   & $0 / \alpha^{2k}$\\

  & \quad $\Lock0b_k(G_i)$   &$0/M^3$ \\
  \hline

\end{tabular}
\end{center}
\caption{\label{figure:tableG}
Definition of the strategies of the players $G_i$ with $1 \le i \le k$.}
\end{figure}

\begin{figure}[h]
\begin{center}
\begin{tabular}[ht]{|l|l|l|}
  \hline
  Strategies & Resources & Latencies \\
\hline
\hline

  One &   $\TriggerX_{i,0}$                       &$0/\alpha \beta$ \\
      &   $\BlockX_{i,1}$& $0/M^4$\\
      \cline{2-3}
     &\multicolumn{2}{l|}{{\it If gate $g_k$ of circuit $S$ has $x_i$ as input $a$:}}\\
        \cline{2-3}
  &  \quad $\Bit1a_k$ of $G(S)$  & $0/\alpha^{2k}$\\
  &  \quad $\Lock1a_k(X_i)$ of $G(S)$ &$0/M^3$ \\
      \cline{2-3}
     &\multicolumn{2}{l|}{{\it If gate $g_k$ of circuit $S$ has $x_i$ as input $b$:}}\\
        \cline{2-3}
  &  \quad $\Bit1b_k$ of $G(S)$  & $0/\alpha^{2k}$\\
  &  \quad $\Lock1b_k(X_i)$ of $G(S)$ &$0/M^3$ \\
  \hline
  Zero            & $\TriggerX_{i,1}$& $0/\alpha \beta$\\
     &   $\BlockX_{i,0}$& $0/M^4$\\
        \cline{2-3}
  &\multicolumn{2}{l|}{ {\it If gate $g_k$ of circuit $S$ has $x_i$ as input $a$:}} \\
     \cline{2-3}
  & \quad $\Bit0a_k$ of $G(S)$ & $0 / \alpha^{2k}$\\
  & \quad $\Lock0a_k(X_i)$ of $G(S)$ &$0/M^3$ \\
   \cline{2-3}
    &\multicolumn{2}{l|}{{\it If gate $g_k$ of circuit $S$ has $x_i$ as input $b$:}}\\
   \cline{2-3}
  & \quad $\Bit0b_k$ of $G(S)$  & $0 / \alpha^{2k}$\\

  & \quad $\Lock0b_k(X_i)$ of $G(S)$  &$0/M^3$ \\
  \hline
\end{tabular}
\end{center}
\caption{\label{figure:tableX}
Definition of the strategies of the players $X_i$ with $1 \le i \le n$.}
\end{figure}

\begin{figure}[h]
\begin{center}
\begin{tabular}[ht]{|l|l|l|}
  \hline
  Strategies & Resources & Latencies \\
\hline
\hline
Lock001&  $\TriggerUnlockG_i$ & $ \alpha / \alpha^3$\\
       & $\Lock0a_i(G_i)$   & $0/M^3$   \\
       & $\Lock0b_i(G_i)$   & $0/M^3$   \\
   \cline{2-3}
    &\multicolumn{2}{l|}{{\it If gate $g_i$  has $g_k$ as input $a$:}}\\
   \cline{2-3}
       & \quad $\Lock1a_i(G_k)$ incl. X and Y   & $0/M^3$   \\
       & \quad $\LockGate_k(\LockG_i)$ without X Y  & $0/M$   \\
   \cline{2-3}
    &\multicolumn{2}{l|}{{\it If gate $g_i$  has $g_l$ as input $b$:}}\\
   \cline{2-3}
       & \quad $\Lock1b_i(G_l)$   & $0/M^3$   \\
       & \quad $\LockGate_l(\LockG_i)$  & $0/M$   \\

\hline
Lock101 &$\TriggerUnlockG_i$ & $ \alpha / \alpha^3$\\
       & $\Lock0a_i(G_i)$   & $0/M^3$   \\
       & $\Lock0b_i(G_i)$   & $0/M^3$   \\

   \cline{2-3}
    &\multicolumn{2}{l|}{{\it If gate $g_i$  has $g_k$ as input $a$:}}\\
   \cline{2-3}
       & \quad $\LockGate_k(\LockG_i)$ &   $0/M$   \\
       & \quad $\Lock0a_i(G_k)$   & $0/M^3$   \\
   \cline{2-3}
    &\multicolumn{2}{l|}{{\it If gate $g_i$  has $g_l$ as input $b$:}}\\
   \cline{2-3}
       &  \quad $\Lock1b_i(G_l)$   & $0/M^3$   \\
       &  \quad $\LockGate_l(\LockG_i)$ &   $0/M$   \\

\hline
Lock011  & $\TriggerUnlockG_i$ & $ \alpha / \alpha^3$\\
        & $\Lock0a_i(G_i)$   & $0/M^3$   \\
       & $\Lock0b_i(G_i)$   & $0/M^3$   \\

   \cline{2-3}
    &\multicolumn{2}{l|}{{\it If gate $g_i$  has $g_k$ as input $a$:}}\\
   \cline{2-3}
       &  \quad $\Lock1a_i(G_k)$   & $0/M^3$   \\
       & \quad  $\LockGate_k(\LockG_i)$ &  $0/M$   \\
   \cline{2-3}
    &\multicolumn{2}{l|}{{\it If gate $g_i$  has $g_l$ as input $b$:}}\\
   \cline{2-3}
       & \quad $\Lock0b_i(G_l)$   & $0/M^3$   \\

       & \quad $\LockGate_l(\LockG_i)$ &   $0/M$   \\

\hline
Lock110   &$\TriggerUnlockG_i$ & $ \alpha / \alpha^3$\\
        & $\Lock1a_i(G_i)$   & $0/M^3$   \\
       &  $\Lock1b_i(G_i)$   & $0/M^3$   \\
   \cline{2-3}
    &\multicolumn{2}{l|}{{\it If gate $g_i$  has $g_k$ as input $a$:}}\\
   \cline{2-3}
       &  \quad $\Lock0a_i(G_k)$   & $0/M^3$   \\
        &  \quad $\LockGate_k(\LockG_i)$   & $0/M$   \\
    \cline{2-3}
    &\multicolumn{2}{l|}{{\it If gate $g_i$  has $g_l$ as input $b$:}}\\
   \cline{2-3}
      &  \quad $\LockGate_l(\LockG_i)$  & $0/M^3$   \\
      &   \quad $\Lock0b_i(G_l)$   & $0/M$   \\

\hline
Unlock & $\LockGate_i(\text{Controller})$       & $0/M^2$\\
       & $\TriggerLockG_i$                     & $0/\alpha^2$\\
   \cline{2-3}
&\multicolumn{2}{l|}{ {\it If gate $g_k$ has $g_i$ as input $a$:}} \\
   \cline{2-3}
      &  \quad $\LockGate_i(\LockG_k)$       & $0/M$\\

  \hline
\end{tabular}
\end{center}
\caption{\label{figure:tableLockG}
Definition of the strategies of the players Lock$G_i$ with $1 \le i \le k$.}
\end{figure}
\renewcommand{\arraystretch}{0.5}

\begin{figure}[h]
\begin{center}
\begin{tabular}[th]{|l|l|l|}
  \hline
  Strategies & Resources & Latencies \\
  \hline \hline
  One & $\One_j$                        &$4\alpha^4 \gamma^j$ \\
    \cline{2-3}
       &\multicolumn{2}{l|}{{\it If gate $g_k$ of circuit $S$ has $y_j$ as input $a$: }} \\
         \cline{2-3}
      & \quad $\Bit1a_k$ of $G(S)$  & $0/\alpha^{2k}$\\
      & \quad $\Lock1a_k(Y_j)$ of $G(S)$ &$0/M^3$ \\
    \cline{2-3}
       &\multicolumn{2}{l|}{{\it If gate $g_k$ of circuit $S$ has $y_j$ as input $b$: }} \\
         \cline{2-3}
      & \quad $\Bit1b_k$ of $G(S)$  & $0/\alpha^{2k}$\\
      & \quad $\Lock1b_k(Y_j)$ of $G(S)$ &$0/M^3$ \\

  \hline
  $\change^j_{i,b}$ & $\Change_j$                & $3\alpha^3 \gamma^j$\\
  (for all     & $\BlockS_0$ & $0/M^2$\\
  $i \inn$  &  $\TriggerX_{i,b}$  & $0 / \alpha \beta$ \\
            & $\ResetDoneY_j$     & $0/M^5$ \\
     \cline{2-3}
   and $b \inb$)     &\multicolumn{2}{l|}{{\it For all $ 1 \le j' \le j$: }}\\
          \cline{2-3}
       &\quad $\TriggerY_{j'}$ & $0 / 5\alpha^5 \gamma^{j'}$  \\
   \cline{2-3}
       &\multicolumn{2}{l|}{{\it  For all vectors $(i',j',b')$ except $(i,j,b)$ }} \\
       &\multicolumn{2}{l|}{{\it  with $i' \inn$, $j' \inm$}} \\
       &\multicolumn{2}{l|}{{\it  and $b' \inb$: }} \\

         \cline{2-3}

    & \quad $\BlockS^{j'}_{i',b'}$ & $0/M^2$\\
       \cline{2-3}
       &\multicolumn{2}{l|}{{\it If gate $g_k$ of circuit $S$ has $y_j$ as input $a$: }} \\
         \cline{2-3}
      & \quad $\Bit1a_k$ of $G(S)$  & $0/\alpha^{2k}$\\
      & \quad $\Lock1a_k(Y_j)$ of $G(S)$ &$0/M^3$ \\
    \cline{2-3}
       &\multicolumn{2}{l|}{{\it If gate $g_k$ of circuit $S$ has $y_j$ as input $b$: }} \\
         \cline{2-3}
      & \quad $\Bit1b_k$ of $G(S)$  & $0/\alpha^{2k}$\\
      & \quad $\Lock1b_k(Y_j)$ of $G(S)$ &$0/M^3$ \\
  \hline
 $\ccheck^j_{i,b}$ & $\CCheck_j$ & $2\alpha^2 \gamma^j$\\
 (for all & $\TriggerY_{j}$  & $0 / 5\alpha^5 \gamma^{j}$  \\
 $i \inn$ &  $\BlockX_{i,1-b}$& $0/M^4$\\
 and $b \inb$) & $\TriggerController$&  $1/\beta^2$\\
             & $\ResetDoneY_j$     & $0/M^5$ \\

   \cline{2-3}
  &\multicolumn{2}{l|}{{\it For all $ 1 \le j' < j$: }}\\
   \cline{2-3}
   &\quad $\TriggerDoneY_{j'}$& $0/M^4$\\

    \cline{2-3}
       &\multicolumn{2}{l|}{{\it  For all vectors $(i',j',b')$ except $(i,j,b)$ }} \\
       &\multicolumn{2}{l|}{{\it  with $i' \inn$, $j' \inm$  }} \\
       &\multicolumn{2}{l|}{{\it  and $b' \inb$: }} \\

         \cline{2-3}

    & \quad $\BlockS^{j'}_{i',b'}$ & $0/M^2$\\

    \cline{2-3}
       &\multicolumn{2}{l|}{{\it If gate $g_k$ of circuit $S$ has $y_j$ as input $a$:}} \\
         \cline{2-3}
&\quad  $\Lock0a_k(Y_j)$ of $G(S)$  &$0/M^3$ \\
  & \quad $\Bit0a_k$ of $G(S)$ & $0 / \alpha^{2k}$\\
    \cline{2-3}
   &\multicolumn{2}{l|}{{\it If gate $g_k$ of circuit $S$ has $y_j$ as input $b$:}} \\
   \cline{2-3}
 & \quad  $\Bit0b_k$ of $G(S)$  & $0 / \alpha^{2k}$\\
 & \quad $\Lock0b_k(Y_j)$ of $G(S)$ &$0/M^3$ \\
     \cline{2-3}
     &\multicolumn{2}{l|}{{\it For all gates $g_k$ of circuit $S_0$:}} \\
        \cline{2-3}
            &  \quad $\TriggerLockG_k$ & $0 / \alpha^2$\\

  \hline

  Zero            & $\TriggerY_j$& $0 / 5\alpha^5 \gamma^j$\\
  & $\TriggerDoneY_j$ & $0/M^4$\\
   & $\mathrm{Block}Y_j$                                         & $0/M^2$ \\
            & $\ResetDoneY_j$     & $0/M^5$ \\

    \cline{2-3}
       &\multicolumn{2}{l|}{{\it If gate $g_k$ of circuit $S$ has $y_j$ as input $a$:}} \\
         \cline{2-3}
& \quad $\Lock0a_k(Y_j)$ of $G(S)$  &$0/M^3$ \\
  &\quad  $\Bit0a_k$ of $G(S)$ & $0 / \alpha^{2k}$\\
    \cline{2-3}
   &\multicolumn{2}{l|}{{\it If gate $g_k$ of circuit $S$ has $y_j$ as input $b$:}} \\
   \cline{2-3}
 &\quad $\Bit0b_k$ of $G(S)$  & $0 / \alpha^{2k}$\\
 &\quad $\Lock0b_k(Y_j)$ of $G(S)$ &$0/M^3$ \\

  \hline
  \end{tabular}
\end{center}
\caption{\label{figure:tableY}
Definition of the strategies of the  players $Y_j$ with $1 \le j \le m$.}
\end{figure}

\end{document}